\newif\ifFull
\let\doendproof\endproof
\renewcommand\endproof{~\hfill\qed\doendproof}
\newcommand{\NP}{\mathsf{NP}}
\renewcommand{\P}{\mathsf{P}}
\DeclareMathOperator{\pagecross}{cr}
\DeclareMathOperator{\treewidth}{tw}
\DeclareMathOperator{\MSO}{MSO}
\DeclareMathOperator{\inc}{I}
\DeclareMathOperator{\minor}{\textsc{minor}}
\DeclareMathOperator{\outerplanar}{\textsc{outerplanar}}
\DeclareMathOperator{\onepage}{\textsc{onepage}}
\DeclareMathOperator{\twopage}{\textsc{twopage}}
\DeclareMathOperator{\separate}{separate}
\newcommand{\formula}[1]{\operatorname{\textsc{#1}}}
\title{Crossing Minimization for $1$-page and $2$-page Drawings of Graphs with Bounded Treewidth}
\author{Michael J. Bannister \and David Eppstein}
\institute{Department of Computer Science, University of California, Irvine}
\begin{document}
\maketitle
\begin{abstract}
We investigate crossing minimization for $1$-page and $2$-page book drawings. We show that computing the $1$-page crossing number is fixed-parameter tractable with respect to the number of crossings, that testing $2$-page planarity is fixed-parameter tractable with respect to treewidth, and that computing the $2$-page crossing number is fixed-parameter tractable with respect to the sum of the number of crossings and the treewidth of the input graph. We prove these results via Courcelle's theorem on the fixed-parameter tractability of properties expressible in monadic second order logic for graphs of bounded treewidth.
\end{abstract}

\pagestyle{plain} 

\section{Introduction}

A \emph{$k$-page book embedding} of a graph $G$ is a drawing that places the vertices of $G$ on a line (the \emph{spine} of the book) and draws each edge, without crossings, inside one of $k$ half-planes bounded by the line (the \emph{pages} of the book)~\cite{Kai-GC-1974,Oll-SCC-1973}. In one common drawing style, an \emph{arc diagram}, the edges in each page are drawn as circular arcs perpendicular to the spine~\cite{Wat-VIS-2002}, but the exact shape of the edges is unimportant for the existence of book embeddings. These embeddings can be generalized to \emph{$k$-page book drawings}: as before, we place each vertex on the spine and each edge within a single page, but with crossings allowed. The \emph{crossing number} of such a drawing is defined to be the sum of the numbers of crossings within each page, and the \emph{$k$-page crossing number} $\pagecross_k(G)$ is the minimum number of crossings in any $k$-page book drawing~\cite{ShaSykSze-GTCCS-1995}. In an optimal drawing, two edges in the same page cross if and only if their endpoints form interleaved intervals on the spine, so the problem of finding an optimal drawing may be solved by finding a permutation of the vertices and an assignment of edges to pages minimizing the number of pairs of edges with interleaved intervals on the same page.

As with most crossing minimization problems, $k$-page crossing minimization is $\NP$-hard; even the simple special case of testing whether the $2$-page crossing number is zero is $\NP$-complete~\cite{ChuLeiRos-SJADM-87}. However, it may still be possible to solve these problems in polynomial time for restricted families of graphs and restricted values of~$k$. For instance, recently Bannister, Eppstein and Simons~\cite{BanEppSim-GD-2013} showed the computation of $\pagecross_1(G)$ and $\pagecross_2(G)$ to be fixed-parameter tractable in the almost-tree parameter; here, a graph $G$ has almost-tree parameter $k$ if every biconnected component of $G$ can be reduced to a tree by removing at most $k$ edges.  In this paper we improve these results by finding fixed-parameter tractable algorithms for stronger parameters, allowing $k$-page crossing minimization to be performed in polynomial time for a much wider class of graphs.

\subsection{New results}
We design fixed-parameter algorithms for computing the minimum number of crossings $\pagecross_1(G)$ in a $1$-page drawing of a graph $G$, and the minimum number of crossings $\pagecross_2(G)$ in a $2$-page drawing of $G$. Ideally, fixed-parameter algorithms for crossing minimization should be parameterized by their \emph{natural parameter}, the optimal number of crossings. We achieve this ideal bound, for the first time, for $\pagecross_1(G)$. However, for $\pagecross_2(G)$, even testing whether a given graph is 2-page planar (that is, whether $\pagecross_2(G)=0$) is $\NP$-complete~\cite{ChuLeiRos-SJADM-87}. Therefore, unless $\P=\NP$, there can be no fixed-parameter-tractable algorithm parameterized by the crossing number. Instead, we show that $\pagecross_2(G)$ is fixed-parameter tractable in the sum of the natural parameter and the treewidth of $G$. One consequence of our result on $\pagecross_2(G)$ is that it is possible to test whether a given graph is 2-page planar, in time that is fixed-parameter tractable with respect to treewidth.

We construct these algorithms via Courcelle's theorem~\cite{Cou-IC-1990, Courcelle-Book}, which connects the expressibility of graph properties in monadic second order logic with the fixed-parameter tractability of these properties with respect to treewidth. Recall that second order logic extends first order logic by allowing the quantification of $k$-ary relations in addition to quantification over individual elements. In monadic second order logic we are restricted to quantification over unary relations (equivalently subsets) of vertices and edges. The property of having a 2-page book embedding is easy to express in (full) second-order logic, via the known characterization that a graph has such an embedding if and only if it is a subgraph of a Hamiltonian planar graph~\cite{BerKai-JCT-1979}. However, this expression is not allowed in monadic second-order logic because the extra edges needed to make the input graph Hamiltonian cannot be described by a subset of the existing vertices and edges of the graph. Instead, we prove a new structural description of $2$-page planarity that is more easily expressed in monadic second order logic.

\subsection{Related work}
As well as the previous work on crossing minimization for almost-trees~\cite{BanEppSim-GD-2013}, related results in fixed-parameter optimization of crossing number include a proof by Grohe, using Courcelle's theorem, that the topological crossing number of a graph is fixed-parameter tractable in its natural parameter~\cite{Gro-JCSS-2004}. This result was later improved by Kawarabayashi and Reed~\cite{KawRee-STOC-07}. Based on these results the crossing number itself was also shown to be fixed-parameter tractable; Pelsmajer et al. showed a similar result for the odd crossing number~\cite{PelSchSte-GD-07}.
In \emph{layered graph drawing}, Dujmovi\'c et al. showed that finding a drawing with $k$ crossings and $h$ layers is fixed-parameter tractable in the sum of these two parameters; this result depends on a bound on the pathwidth of such a drawing, a parameter closely related to its treewidth~\cite{Duj-A-2008}. 

Like many of these earlier algorithms, our algorithms have a high dependence on their parameter, rendering them impractical. For this reason we have not attempted an exact analysis of their complexity nor have we searched for optimizations to our logical formulae that would improve this complexity.

\section{Preliminaries}

\subsection{Bridges vs flaps and isthmuses}
There is an unfortunate terminological confusion in graph theory: two different concepts, a maximal subgraph that is internally connected by paths that avoid a given cycle, and an edge whose removal disconnects the graph,  are both commonly called \emph{bridges}.
We need both concepts in our algorithms. To avoid confusion, we call the subgraph-type bridges \emph{flaps} and the edge-type bridges \emph{isthmuses}. To be more precise, given a graph $G$ and a cycle $C$, we define an equivalence relation on the edges of $G\setminus C$ in which two edges are equivalent if they belong to a path that has no interior vertices in $C$, and we define a \emph{flap} of $C$ to be the subgraph formed by an equivalence class of this relation. (In general, different cycles will give rise to different flaps.) And given a graph $G$, we define an \emph{isthmus} of $G$ to be an edge of $G$ that does not belong to any simple cycles in $G$.

\subsection{Treewidth and graph minors}
The \emph{treewidth} of $G$ can be defined to be one less than the number of vertices in the largest clique in a chordal supergraph of $G$ that (among possible chordal supergraphs) is chosen to minimize this clique size~\cite{Bod-TCS-1998}. The problem of computing the treewidth of a general graph is $\NP$-hard~\cite{ArnCorPro-JADM-1987}, but it is fixed-parameter tractable in its natural parameter~\cite{Bod-STOC-1993}.

A graph $H$ is said to be a \emph{minor} of a graph $G$ if $H$ can be constructed from $G$ via a sequence edge contractions, edge deletions, and vertex deletions. It can be determined whether a graph $H$ is a minor of a graph $G$, in time that is polynomial in the size of $G$ and fixed-parameter tractable in the size of~$H$~\cite{RobSey-JCT-1995}.

\subsection{Logic of graphs}
We will be expressing graph properties in \emph{extended monadic second-order logic} ($\MSO_2$). This is a fragment of second-order logic that includes:
\begin{itemize}
\item
variables for vertices, sets of vertices, edges, and sets of edges;
\item
binary relations for equality ($=$), inclusion of an element in a set ($\in$) and edge-vertex incidence ($\inc$);
\item
the standard propositional logic operations: $\neg, \wedge, \vee, \to$;
\item
the universal quantifier ($\forall$) and the existential quantifier ($\exists$), both which may be applied to variables of any of the four variable types.
\end{itemize}
To distinguish the variables of different types, we will use $u,v,w,\ldots$ for vertices, $e,f,g,\ldots$ for edges, and capital letters for sets of vertices or edges (with context making clear which type of set).
Given a graph $G$ and an $\MSO_2$ formula~$\phi$ we write $G \models \phi$ (``$G$ models $\phi$'') to express the statement that $\phi$ is true for the vertices, edges, and sets of vertices and edges in $G$, with the semantics of this relation defined in the obvious way. $\MSO_2$ differs from full second order logic in that it allows quantification over sets, but not over higher order relations, such as sets of pairs of vertices that are not subsets of the given edges.
\ifFull
In Appendix~\ref{sec:mso2}, we provide a brief introduction to $\MSO_2$ logic in which we describe how to express some of the properties we need for our results.
\fi

The reason we care about expressing graph properties in $\MSO_2$ is the following powerful algorithmic meta-theorem due to Courcelle.

\begin{lemma}[Courcelle's theorem~\cite{Cou-IC-1990, Courcelle-Book}]
Given an integer $k \geq 0$ and an $\MSO_2$-formula $\phi$ of length $\ell$, an algorithm can be constructed that takes as input a graph $G$ of treewidth at most $k$ and decides in $O\big(f(k,\ell)\cdot(n+m)\big)$ time whether $G \models \phi$, where the function $f$ appearing in the time bound is a computable function of the treewidth $k$ and formula length~$\ell$.
\end{lemma}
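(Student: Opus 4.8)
The plan is to prove this well-known meta-theorem along the standard automata-theoretic lines, reducing the model-checking problem to running a finite tree automaton over a tree decomposition. First I would apply Bodlaender's fixed-parameter algorithm~\cite{Bod-STOC-1993} to compute, from $G$, a tree decomposition of width at most $k$, and then normalize it in linear time to a \emph{nice} rooted tree decomposition: a binary tree with $O(n+m)$ nodes of introduce-vertex, introduce-edge, forget, and join types, all bags having at most $k+1$ vertices. Because every bag has bounded size, such a decomposition can be read off as a term $t$ over a \emph{finite} signature $F_k$ of operations on graphs equipped with at most $k+1$ labelled source vertices (disjoint union, source renaming, source forgetting, and adding an edge between two named sources); the graph $G$ is, up to isomorphism, the value of the term $t$ in the associated graph algebra. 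Handling $\MSO_2$ rather than $\MSO_1$ --- i.e.\ allowing quantification over sets of edges --- is arranged by working throughout with the incidence representation of $G$, whose treewidth exceeds that of $G$ by at most one, so that edges become a distinguished sort of element, $\inc$ becomes ordinary adjacency in this structure, and the machinery below applies uniformly to vertex and edge (set) variables.

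The technical heart is to show, by induction on the structure of the formula, that for every $\MSO_2$ formula $\psi(\bar X)$ with free monadic variables $\bar X = (X_1,\dots,X_r)$ (individual element variables are treated analogously, or simulated by singletons) there is a deterministic bottom-up finite tree automaton $\mathcal{A}_\psi$ over the alphabet $F_k$ augmented with $r$ extra bits at each introduce node --- recording, for the element just introduced, its membership in each $X_i$ --- such that $\mathcal{A}_\psi$ accepts an augmented term if and only if the encoded graph together with the encoded assignment of the $X_i$ satisfies $\psi$. The atomic cases ($x \in X$, $\inc(e,v)$, $x = y$) need only automata that track the current bag positions of the relevant elements; $\wedge$, $\vee$ and $\neg$ correspond to products and complementation of tree automata; and the key case $\exists X_i\,\psi'$ is obtained by projecting away the $X_i$-bit --- which yields a \emph{nondeterministic} tree automaton guessing the set $X_i$ --- and then redeterminizing. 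Determinization is where the blow-up lives: the state count of $\mathcal{A}_\psi$ is bounded by a computable (and in general non-elementary) function of $k$ and the length of $\psi$, and this bound furnishes the promised function $f(k,\ell)$. The whole construction is effective, so both the automaton and $f$ are computable from $k$ and $\phi$.

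Finally, since the $\phi$ in the statement is a sentence, I would run the deterministic automaton $\mathcal{A}_\phi$ over the term $t$ in a single bottom-up pass: each node is processed in time depending only on the (finite, $f(k,\ell)$-sized) automaton, so the total running time is $O\big(f(k,\ell)\cdot(n+m)\big)$, and $\phi$ holds in $G$ exactly when $t$ is accepted. Equivalently, one can present the same computation as a bottom-up dynamic program over the nice tree decomposition whose table at each node lists the finitely many $\MSO_2$-types, of quantifier rank at most that of $\phi$, realizable by the part of the graph processed so far. I expect the main obstacle to be establishing the inductive invariant for the automaton construction --- in particular verifying that the projection-plus-determinization step genuinely captures existential quantification over a set, that the type/state assigned at a join node is determined by those of its two children together with the shared bag, and that the encoding of free-variable assignments stays consistent across introduce, forget, and join nodes.
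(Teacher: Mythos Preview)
The paper does not prove this lemma at all: it is stated as a cited result (attributed to Courcelle~\cite{Cou-IC-1990, Courcelle-Book}) and used as a black box, so there is no ``paper's own proof'' to compare against. Your proposal is the standard automata-theoretic proof of Courcelle's theorem --- nice tree decomposition via Bodlaender, encoding as a term over a finite graph algebra, inductive construction of a bottom-up tree automaton with projection and determinization for existential set quantifiers, and a single linear-time bottom-up run --- and it is correct in outline. It is also entirely unnecessary for the purposes of this paper, which only needs the statement; if you were writing this up, a citation would suffice.
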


\subsection{Combinatorial enumeration of crossing diagrams}
In order to show that the properties we study can be represented by logical formulas of finite length, we need to bound the number of combinatorially distinct ways that a subset of edges in a $k$-page graph drawing can cross each other.

We define a \emph{$1$-page crossing diagram} to be a placement of some points on the circumference of a circle, together with some straight line segments connecting the points such that each point is incident to a segment, no segment is uncrossed and no three segments cross at the same point. Two crossing diagrams are \emph{combinatorially equivalent} if they have the same numbers of points and line segments and there exists a cyclic-order-preserving bijection of their points that takes line segments to line segments.
The \emph{crossing number} of a $1$-page crossing diagram is the number of pairs of its line segments that cross each other.

We define a \emph{$2$-page crossing diagram} to be a $1$-page crossing diagram together with a labeling of its line segments by two colors. For a $2$-page crossing diagram we define the \emph{crossing number} to be the total number of crossing pairs of line segments that have the same color as each other.

\begin{lemma}\label{lem:crossing-diagrams}
There are $2^{O(k^2)}$ 1-page crossing diagrams with $k$ crossings, and there are $2^{O(k^2)}$ 2-page crossing diagrams with $k$ crossings.
\end{lemma}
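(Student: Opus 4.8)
The plan is to give a deliberately crude counting argument: first bound how many points a crossing diagram with $k$ crossings can have, then bound the number of combinatorially inequivalent ways to place segments among that many points. For the $1$-page case, I would introduce the \emph{crossing graph} of a diagram, whose vertices are the segments and whose edges are the crossing pairs of segments. This graph has exactly $k$ edges, and the requirement that no segment is uncrossed says it has no isolated vertices; a graph with $k$ edges and no isolated vertices has at most $2k$ vertices. Hence the diagram has at most $2k$ segments and therefore at most $4k$ points on the circle.

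Next I would count the equivalence classes. A combinatorial equivalence class with $p$ points is determined by the cyclic arrangement of its $p$ points, which equivalence collapses to a single arrangement, together with the choice of which of the $\binom{p}{2}$ pairs of points are joined by a segment; this is at most $2^{\binom{p}{2}}$ choices. Summing over the $p \le 4k$ possible values of $p$ gives at most $(4k+1)\cdot 2^{\binom{4k}{2}} = 2^{O(k^2)}$ diagrams, and the remaining defining conditions (each point incident to a segment, no three segments concurrent, no uncrossed segment) only restrict this set further. For $2$-page crossing diagrams the argument is identical with one extra factor: the bound of $2k$ segments persists, since every segment still participates in a crossing with a segment of its own color, so applying the crossing-graph argument separately to each color class bounds the number of segments of each color by twice its monochromatic crossing count, for a total of at most $2k$; relative to the $1$-page count we additionally choose one of two colors for each of the at most $2k$ segments, a factor of at most $2^{2k}$, for a total of $(4k+1)\cdot 2^{\binom{4k}{2}}\cdot 2^{2k} = 2^{O(k^2)}$.

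The only step needing care is the bound on the number of segments; everything after it is a loose product bound. In the $1$-page case that bound is immediate from the no-uncrossed-segment condition. In the $2$-page case one must observe that this condition may be applied per color class — equivalently, that a $2$-page crossing diagram arising from an actual $2$-page book drawing has no crossing between differently colored segments, since edges on different pages never cross — which is what keeps the number of segments, and hence the length of the logical formulas built from these diagrams later, finite.
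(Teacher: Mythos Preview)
Your argument is correct and follows essentially the same route as the paper: bound the number of points by $4k$ (via the observation that at most $2k$ segments can participate in $k$ crossings when no segment is uncrossed), then count subsets of pairs of these points to get $2^{O(k^2)}$. You are more explicit than the paper in justifying the $4k$-point bound via the crossing-graph argument, and in flagging the per-color interpretation needed in the $2$-page case, but the underlying counting is the same.
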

\begin{proof}
Place $4k$ points around a circle. Then every $1$-page crossing diagram with $k$ or fewer crossings can be represented by choosing a subset of the points and a set of line segments connecting a subset of pairs of the points. There are $4k$ points and $4k(4k-1)/2$ pairs of points, so $2^{O(k^2)}$ possible subsets to choose.

Similarly, every $2$-page crossing diagram can be represented by a subset of the same $4k$ points, and two disjoint subsets of pairs of points, which again can be bounded by~$2^{O(k^2)}$.
\end{proof}

Two combinatorially equivalent crossing diagrams, as defined above, may have a topology that differs from each other, or from combinatorially equivalent diagrams with curved edges. This is because, for an edge with multiple crossings, the order of the crossings along this edge may differ from one diagram to another, but this ordering is not considered as part of the definition of combinatorial equivalence. For our purposes such differences are unimportant, as we are concerned only with the total number of crossings. So we consider two crossing diagrams to be equivalent if they have the same crossing pairs of edges, regardless of whether the crossings occur in the same order.

\section{$1$-page crossing minimization}

\subsection{Outerplanarity}
Recall that a graph is \emph{outerplanar} if there exists a placement of its vertices on the circumference of a circle such that when its edges are drawn as straight line segments they do not cross. Topologically, the circle and the half-plane are equivalent, so a graph is outerplanar if and only if it has a crossing-free $1$-page drawing.
For incorporating a test of outerplanarity into methods using Courcelle's theorem, it is convenient to use a standard characterization of the outerplanar graphs by forbidden minors:

\begin{lemma}[Chartrand and Harary~\cite{ChaHar-1967}]\label{lem:1-planar-minor}
A graph $G$ is outerplanar ($1$-page planar) if and only if it contains neither $K_4$ nor $K_{2,3}$ as a minor.
\end{lemma}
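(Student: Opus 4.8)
The plan is to deduce the lemma from Wagner's theorem (the excluded-minor characterization of planar graphs) together with the classical \emph{apex} reformulation of outerplanarity. For a graph $G$, let $G^{+}$ denote the graph obtained from $G$ by adding one new vertex joined to every vertex of $G$. The first step is to establish that $G$ is outerplanar if and only if $G^{+}$ is planar. One direction places the new vertex inside the outer face of an outerplanar embedding of $G$ and joins it to all vertices without creating a crossing; the other direction deletes the apex from a planar embedding of $G^{+}$ and observes that, since every vertex of $G$ is a neighbour of the apex, all of them lie on the boundary of the single face vacated by the apex, which is thus an outerplanar embedding of $G$. (The paper's definition of outerplanarity via straight-line segments agrees with this topological notion, since every topologically outerplanar graph admits a convex, hence straight-line, outerplanar drawing.)

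Next I would record the compatibility of the apex construction with minors: if $H$ is a minor of $G$, then $H^{+}$ is a minor of $G^{+}$. Indeed, a sequence of vertex deletions, edge deletions, and edge contractions turning $G$ into $H$ can be mirrored in $G^{+}$ (contracting an edge of $G$ merely merges two apex edges into one), and because the apex begins adjacent to everything it remains adjacent to every surviving branch set, so the result is exactly $H^{+}$. Since planarity is minor-closed, this yields the easy direction of the lemma: $K_{4}^{+}=K_{5}$, and $K_{2,3}^{+}\supseteq K_{3,3}$ (the apex together with the two-element side of $K_{2,3}$ forms one colour class of a $K_{3,3}$), and both $K_5$ and $K_{3,3}$ are non-planar; hence if an outerplanar $G$ contained $K_{4}$ or $K_{2,3}$ as a minor, then $G^{+}$ would contain $K_{5}$ or $K_{3,3}$ as a minor, contradicting planarity of $G^{+}$.

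For the converse, suppose $G$ is not outerplanar. Then $G^{+}$ is not planar, so by Wagner's theorem it has $K_{5}$ or $K_{3,3}$ as a minor; fix disjoint connected branch sets realizing this together with the required connecting edges. The apex vertex lies in at most one branch set. If it lies in none, the whole model already lives in $G$, and since $K_{4}\subseteq K_{5}$ and $K_{2,3}\subseteq K_{3,3}$ we obtain a $K_{4}$ or $K_{2,3}$ minor of $G$. If it lies in exactly one branch set, discard that branch set: in the $K_{5}$ case the remaining four branch sets give a $K_{4}$ minor, and in the $K_{3,3}$ case the remaining five (two from one side, three from the other) give a $K_{2,3}$ minor. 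In every case $G$ has one of the two forbidden minors, which is the contrapositive of the claim.

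The one point needing genuine care is the surgery in the last paragraph: after deleting the apex's branch set, the surviving branch sets must still be connected within $G$ and still pairwise joined by edges of $G$. This holds because the only edges of $G^{+}$ absent from $G$ are those incident to the apex, and the apex belongs to the discarded branch set, so no such edge is used by the surviving part of the model — the induced subgraphs on the surviving branch sets, and the surviving connecting edges, all live in $G$. A secondary point is to invoke the minor form of Wagner's theorem rather than the subdivision form of Kuratowski's theorem, so that the branch-set bookkeeping goes through directly without having to track degree-two subdivision vertices.
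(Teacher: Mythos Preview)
Your argument is correct. The apex construction $G\mapsto G^{+}$ together with Wagner's theorem is the standard modern route to this characterization, and you handle the only delicate step---removing the branch set containing the apex while keeping the remaining branch sets connected and pairwise adjacent inside $G$---cleanly and explicitly.

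There is no comparison to make with the paper's proof, because the paper does not prove this lemma: it is quoted as a known result of Chartrand and Harary~\cite{ChaHar-1967} and used as a black box to build the formula $\outerplanar$. So your write-up supplies more than the paper does. If anything, you could shorten it: the first paragraph's remark about straight-line versus topological outerplanarity, while true, is tangential here, and the final paragraph's ``secondary point'' about Wagner versus Kuratowski is already implicit in your consistent use of branch sets throughout.
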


\begin{lemma}[Corollary 1.15 in \cite{Courcelle-Book}]
\label{lem:mso-minor}
Given any fixed graph $H$ there exists a $\MSO_2$-formula $\phi$ such that, for all graphs $G$, $G \models \phi$ if and only if $G$ contains $H$ as a minor.  We will write $\minor_H$ for $\phi$.
\end{lemma}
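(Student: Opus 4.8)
The plan is to express, directly in $\MSO_2$, the standard branch-set characterization of the minor relation. Recall that $G$ contains $H$ as a minor if and only if one can choose, for each vertex $x$ of $H$, a nonempty \emph{branch set} $B_x \subseteq V(G)$ such that (a) the $B_x$ are pairwise disjoint, (b) each induced subgraph $G[B_x]$ is connected, and (c) for every edge $xy$ of $H$ there is an edge of $G$ joining $B_x$ to $B_y$. Given such sets one recovers $H$ from $G$ by deleting the vertices outside $\bigcup_x B_x$, contracting each $G[B_x]$ to a single vertex, and deleting superfluous parallel edges; conversely a minor model of $H$ supplies such sets. The point that makes this work is that $H$ is \emph{fixed}: it has a fixed number $h := |V(H)|$ of vertices and a fixed edge set, so the branch sets can be captured by a bounded number $h$ of set-of-vertices variables, and conditions (a) and (c) become \emph{finite} conjunctions indexed by $V(H)$ and $E(H)$.

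Concretely, I would take $\phi$ of the form $\exists X_1 \cdots \exists X_h\, \psi$, where, identifying $X_i$ with the branch set of the $i$-th vertex of $H$, $\psi$ is the conjunction of: $\exists v\,(v \in X_i)$ for each $i$ (nonemptiness); $\neg\exists v\,(v \in X_i \wedge v \in X_j)$ for each $i<j$ (disjointness); $\exists e\,\exists u\,\exists v\,(\inc(e,u)\wedge \inc(e,v)\wedge u\in X_i \wedge v\in X_j)$ for each edge of $H$ between its $i$-th and $j$-th vertices (condition (c)); and $\formula{conn}(X_i)$ for each $i$, where $\formula{conn}$ is the standard monadic-second-order connectivity predicate
\[
\formula{conn}(S)\ :=\ \neg\,\exists A\,\exists B\,\Bigl(\,A\cup B=S\ \wedge\ A\cap B=\emptyset\ \wedge\ A\neq\emptyset\ \wedge\ B\neq\emptyset\ \wedge\ \formula{noedge}(A,B)\,\Bigr),
\]
with $\formula{noedge}(A,B) := \neg\,\exists e\,\exists u\,\exists v\,(\inc(e,u)\wedge\inc(e,v)\wedge u\in A\wedge v\in B)$, and the set operations inside $\formula{conn}$ further unfolded into first-order membership conditions. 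The predicate $\formula{conn}(S)$ says that $S$ cannot be split into two nonempty parts with no edge running between them, which for nonempty $S$ is exactly connectivity of $G[S]$. The length of $\phi$ depends only on $h$ and $|E(H)|$, hence only on $H$; correctness is immediate from the branch-set characterization, so I will write $\minor_H$ for this $\phi$.

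Essentially every step is mechanical once the branch-set characterization is in hand; the only mildly delicate piece is the connectivity gadget $\formula{conn}$, and even that is textbook. The one thing to watch is that ``no edge between $A$ and $B$'' must be phrased through the incidence relation $\inc$, since the $\MSO_2$ signature used in this paper offers only $\inc$ and not a direct vertex-adjacency relation --- a routine translation. I therefore do not expect any real obstacle: the lemma is in effect a restatement of the classical fact that graph connectivity, and hence the existence of a minor model of a fixed graph, is $\MSO_2$-expressible.
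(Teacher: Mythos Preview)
Your proposal is correct and follows essentially the same approach as the paper: in Appendix~\ref{sec:mso2} the paper constructs $\minor_H$ by exactly the same branch-set characterization, existentially quantifying over vertex sets $U_1,\dots,U_k$ and asserting nonemptiness, pairwise disjointness, connectivity of each $G[U_i]$, and the existence of an edge between $U_i$ and $U_j$ for every edge $(i,j)\in E_H$. The only cosmetic difference is that the paper phrases the connectivity subformula via a single separating set $U$ rather than a bipartition $A,B$, but the two are trivially equivalent.
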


Let $\outerplanar$ be the formula $\neg \minor_{K_4} \wedge \neg \minor_{K_{2,3}}$. Then \autoref{lem:1-planar-minor} implies that, for all graphs $G$, $G \models \outerplanar$ if and only if $G$ is outerplanar. Because outerplanar graphs have bounded treewidth (at most two), Courcelle's theorem together with \autoref{lem:mso-minor} guarantee the existence of a linear time algorithm for testing outerplanarity. There are of course much simpler linear time algorithms for testing outerplanarity~\cite{Mit-IPL-1979,Wie-WG-1987}.

\subsection{Crossings vs treewidth}
Next, we relate the natural parameter for 1-page crossing minimization (the number of crossings) to the parameter for Courcelle's theorem (the treewidth). This relation will allow us to construct a fixed-parameter-tractable algorithm for the natural parameter.

\begin{figure}[t]
\centering
\includegraphics[scale=0.5]{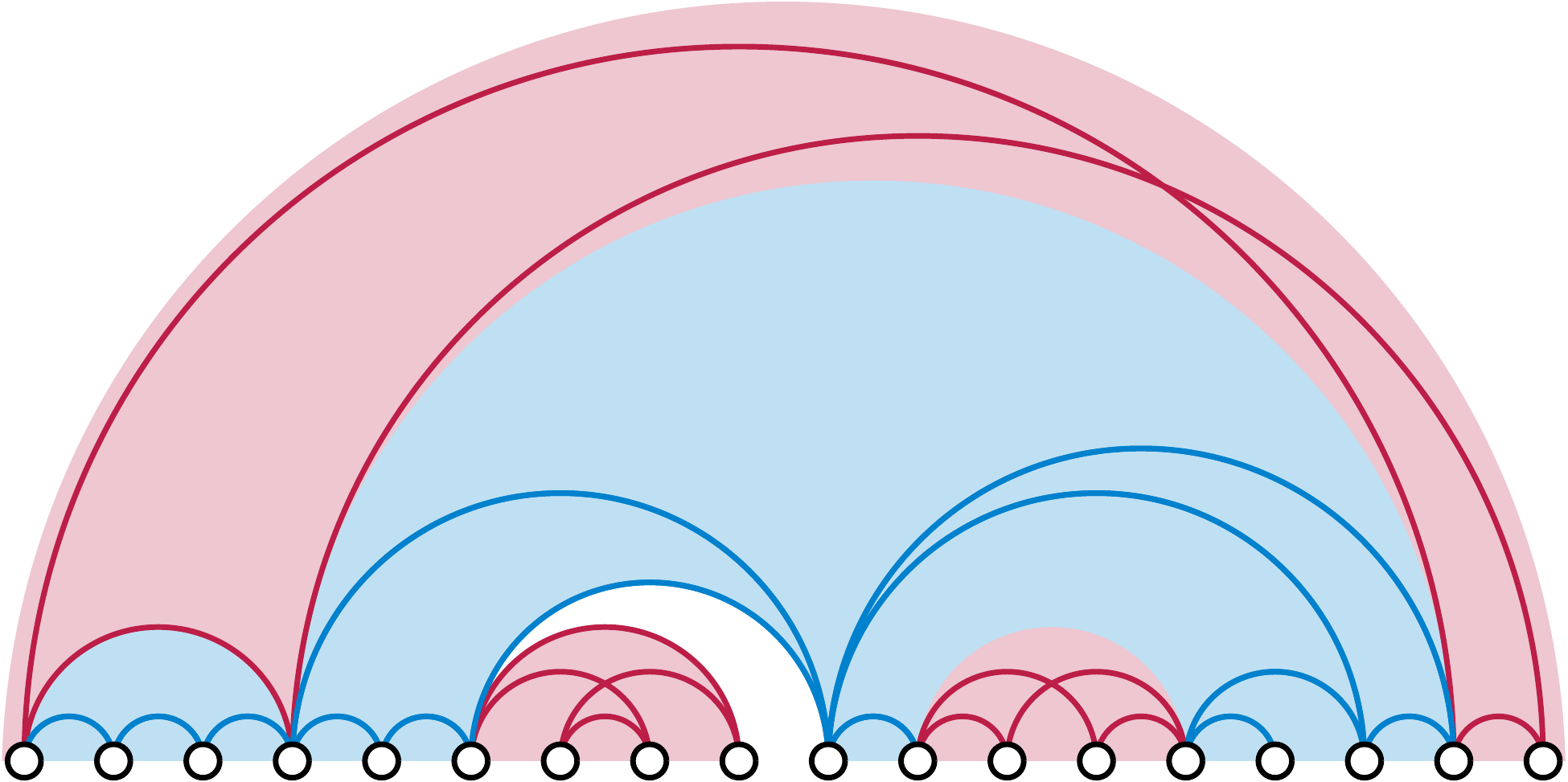}
\caption{An example of the clique-sum decomposition in \autoref{lem:1-page-crossing-treewidth}. The red regions represent the components with crossings and the blue regions represent outerplanar components. The entire graph may be reconstructed by performing clique-sums on the region boundaries.}
\label{fig:1-page-treewidth}
\end{figure}

A \emph{$k$-clique sum} of two disjoint graphs each containing a $k$-clique is formed by bijectively identifying each vertex of one $k$-clique with a vertex of the other $k$-clique, and then removing one or more of the $k$-clique edges from the resulting combined graph. 

\begin{lemma}[Lemma~1 in \cite{DEmHajMoh-AACO-02}]
If $G_1$ and $G_2$ each have treewidth at most $k$, then any clique-sum of $G_1$ and $G_2$ also has treewidth at most $k$.
\end{lemma}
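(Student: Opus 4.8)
The plan is to construct a tree decomposition of the clique-sum directly from tree decompositions of the two summands, without increasing the width. The key ingredient is the standard fact that in any tree decomposition of width at most $k$, every clique of the underlying graph is contained in a single bag. I would establish this via the Helly property of subtrees of a tree: for each vertex $v$, the set of decomposition nodes whose bags contain $v$ induces a subtree, and for the vertices of a clique these subtrees pairwise intersect (since each edge of the clique is covered by some bag), so they share a common node, whose bag then contains the whole clique. In particular a $k$-clique, having $k$ vertices, fits in a bag of size at most $k+1$.

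Given this, fix tree decompositions $(T_1,\{B^1_t\})$ of $G_1$ and $(T_2,\{B^2_t\})$ of $G_2$, each of width at most $k$. Let $Q_1\subseteq V(G_1)$ and $Q_2\subseteq V(G_2)$ be the two $k$-cliques being identified, and let $Q$ be the merged clique in the clique-sum $G$. Pick nodes $t_1\in T_1$ and $t_2\in T_2$ with $Q_1\subseteq B^1_{t_1}$ and $Q_2\subseteq B^2_{t_2}$. Form $T$ by taking the disjoint union of $T_1$ and $T_2$ and adding a single new edge between $t_1$ and $t_2$, keeping every bag unchanged (under the identification of $Q_1$ with $Q_2$).

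It then remains to verify the three tree-decomposition axioms for $(T,\{B_t\})$ as a decomposition of $G$, which I expect to be routine. Every vertex of $G$ still lies in some bag. Every edge of $G$ is still covered: edges inherited from $G_1$ or $G_2$ are covered as before, and the edges of $Q$ are covered since $Q$ lies in both $B^1_{t_1}$ and $B^2_{t_2}$; deleting some clique edges when forming the clique-sum only relaxes this requirement. For the connectivity axiom, a vertex outside $Q$ occurs only in bags from one of the original trees, so it has the same connected set of bags as before, while a vertex of $Q$ occupies a subtree of $T_1$ meeting $t_1$ together with a subtree of $T_2$ meeting $t_2$, and the new edge $t_1t_2$ joins these into a single connected subtree of $T$. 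No bag changed size, so the width is still at most $k$.

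The only step requiring genuine care is this last connectivity check for the shared vertices of $Q$: it is what forces us to choose $t_1$ and $t_2$ to be bags containing the \emph{entire} cliques $Q_1$ and $Q_2$ (rather than gluing at arbitrary nodes), and it is why the "a clique lies in one bag" fact is the crux of the argument. Everything else is immediate.
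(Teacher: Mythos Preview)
Your argument is correct and is the standard proof of this well-known fact. Note, however, that the paper does not give its own proof of this lemma: it is quoted verbatim as ``Lemma~1 in \cite{DEmHajMoh-AACO-02}'' and used as a black box, so there is nothing in the paper to compare your proof against.

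One small remark: the lemma speaks of ``any clique-sum'', not specifically a $k$-clique-sum, so the cliques $Q_1,Q_2$ being identified may have any size $j$ (necessarily $j\le k+1$, since a graph of treewidth at most $k$ has no larger clique). Your Helly argument already handles this in full generality; just replace ``the two $k$-cliques'' by ``the two $j$-cliques'' and nothing else changes.
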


\begin{lemma}\label{lem:1-page-crossing-treewidth}
Every graph $G$ has treewidth $O(\sqrt{\pagecross_1(G)})$.
\end{lemma}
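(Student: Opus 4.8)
The plan is to take an optimal $1$-page drawing of $G$, with $c := \pagecross_1(G)$ crossings and vertices $p_1,\dots,p_n$ in circular order, strip away the edges that are crossed, and recover $G$ as a clique-sum of the faces of the resulting outerplanar skeleton, showing that a face carrying $c_\phi$ crossings has treewidth $O(\sqrt{c_\phi}+1)$ and combining across faces. First I would let $\hat G_0$ be the graph formed by the uncrossed edges of $G$ together with the \emph{spine cycle} $p_1 p_2 \cdots p_n p_1$. As drawn, $\hat G_0$ has no crossings and contains a Hamiltonian cycle, so it is a $2$-connected outerplanar graph with all vertices on its outer face. Since $G$ is a subgraph of $G^{+} := \hat G_0 \cup \{\text{crossed edges of }G\}$, it is enough to bound $\treewidth(G^{+})$. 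The geometric heart of this setup is the observation that a crossed edge, being a chord, crosses no edge of $\hat G_0$ --- crossing an uncrossed chord would make that chord crossed, and a chord cannot cross a spine arc --- so it lies entirely within a single inner face of the drawn $\hat G_0$, with both of its endpoints on that face's bounding cycle.

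Next I would use the standard fact that a $2$-connected outerplanar graph is the iterated clique-sum of its inner facial cycles, glued along shared chords in the order given by the tree structure of its weak dual. Because each crossed edge sits inside one face, the same tree exhibits $G^{+}$ as the iterated clique-sum of the graphs $G_\phi := \partial\phi \cup E_\phi$, where $\phi$ ranges over the inner faces of $\hat G_0$, $\partial\phi$ is the facial cycle, and $E_\phi$ is the set of crossed edges drawn inside $\phi$. By the clique-sum lemma quoted above, $\treewidth(G^{+}) \le \max_\phi \treewidth(G_\phi)$; moreover every crossing of the drawing lies between two crossed edges sharing a face (a crossing point is interior to both edges, hence to both faces), so if $c_\phi$ is the number of crossings among the edges of $E_\phi$ then $\sum_\phi c_\phi = c$. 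It therefore remains to show $\treewidth(G_\phi) = O(\sqrt{c_\phi}+1)$ for each face $\phi$.

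For a fixed face I would argue as follows (the case $c_\phi = 0$ being immediate, since then $G_\phi$ is a cycle). Every chord of $E_\phi$ is crossed by another chord of $E_\phi$, so $|E_\phi| \le 2 c_\phi$ and these chords have at most $4 c_\phi$ endpoints on $\partial\phi$. Contracting each arc of $\partial\phi$ between two consecutive chord endpoints to a single edge produces a graph $G_\phi'$ --- a cycle on at most $4 c_\phi$ vertices together with the same $\le 2 c_\phi$ chords and the same crossing pattern --- and since $G_\phi$ is a subdivision of $G_\phi'$ and $G_\phi'$ contains a cycle, $\treewidth(G_\phi) = \treewidth(G_\phi')$. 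Now planarize $G_\phi'$ by inserting a new vertex at each of its $c_\phi$ crossings; the result $P_\phi'$ is planar with at most $5 c_\phi$ vertices, and a planar graph on $N$ vertices has treewidth $O(\sqrt{N})$, so $\treewidth(P_\phi') = O(\sqrt{c_\phi})$. Finally I would transfer a tree decomposition of $P_\phi'$ back to $G_\phi'$ by replacing, inside every bag, each inserted vertex with the (at most four) endpoints of the two chords crossing there; this enlarges every bag by a factor of at most four and, as one checks directly (edge coverage is clear, and the bags containing a given original vertex stay connected because the dummies along each incident chord form a path attached to that vertex's bags), yields a valid tree decomposition of $G_\phi'$. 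Hence $\treewidth(G_\phi) = \treewidth(G_\phi') = O(\sqrt{c_\phi})$, and combining over faces, $\treewidth(G) \le \treewidth(G^{+}) = O(\sqrt{c}) = O(\sqrt{\pagecross_1(G)})$.

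I expect the last step --- getting $O(\sqrt{c_\phi})$ inside a face rather than the trivial $O(c_\phi)$ bound that comes from simply deleting the $\le 2 c_\phi$ crossed edges of that face --- to be the main obstacle. It is what forces the two-part maneuver: first contracting the chordless arcs so that the ``interesting'' part of the face has only $O(c_\phi)$ vertices, and then exploiting near-planarity via the $O(\sqrt{N})$ treewidth bound for planar graphs, together with the somewhat fiddly verification that a tree decomposition survives being pushed back through the planarization. Setting up the clique-sum decomposition in the first place --- that is, checking that crossed edges never leave a single face of $\hat G_0$ --- is the other point to handle with care, although it is comparatively routine.
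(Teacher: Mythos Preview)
Your proof is correct and follows essentially the same strategy as the paper: augment with the spine cycle, separate the crossed edges from an outerplanar skeleton, bound the treewidth of each ``crossing'' piece via planarization and the $O(\sqrt{N})$ planar treewidth bound (undoing the planarization at a factor-$4$ cost), and reassemble everything by $\{1,2\}$-clique-sums. The only cosmetic difference is the unit of decomposition---you carve along the inner faces of the outerplanar skeleton and analyze each $G_\phi$ with its own $c_\phi$, whereas the paper carves along the connected components of the subgraph induced by the crossed-edge endpoints and bounds each piece directly by the global $k$; both routes use the same ingredients and yield the same bound.
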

\begin{proof}
Let $G$ be a graph with $\pagecross_1(G) = k$, and $D$ a $1$-page drawing of $G$ with $k$ crossings. Then let $H$ be the subgraph of $G$ induced by the endpoints of crossed edges in $D$. The remainder of $G$ after removing the edges in $H$ is a disjoint union of outerplanar graphs. Augment each connected component of $H$ and each outerplanar graph in the remainder of $G$ by adding edges between consecutive vertices along the spine of the drawing, completing a cycle around each connected component. From each augmented connected component $C$ we create a planar graph $C'$ by planarizing $C$ with respect to the drawing $D$. Since $C'$ is a planar graph with $O(k)$ vertices it has treewidth $O(\sqrt{k})$.  $C$ also has treewidth $O(\sqrt{k})$, as its treewidth is at most four times that of $C'$.

The graph $G$ may now be constructed from the augmented connected components and the outerplanar connected components by performing repeated $\{1,2\}$-clique-sums. Since each clique-sum preserves the treewidth, the graph $G$ has treewidth $O(\sqrt{k})$. An example of this construction is depicted in \autoref{fig:1-page-treewidth}.
\end{proof}

\subsection{Logical characterization}
Let $G$ be a graph with bounded $1$-page crossing number, and consider a drawing of $G$ achieving this crossing number. Then the set of crossing edges of the drawing partitions the halfplane into an arrangement of curves, and we can partition $G$ itself into the subgraphs that lie within each face of this arrangement. Each of these subgraphs is itself outerplanar, because it lies within a subset of the halfplane (with its vertices on the boundary of the subset) and has no more crossing edges; see \autoref{fig:1-page-crossing}. This intuitive idea forms the basis for the following characterization of the $1$-page crossing number, which we will use to construct an $\MSO_2$-formula for the property of having a drawing with low crossing number.

\begin{figure}[t]
\begin{minipage}[t]{0.35\linewidth}
\centering
\includegraphics[width=\textwidth]{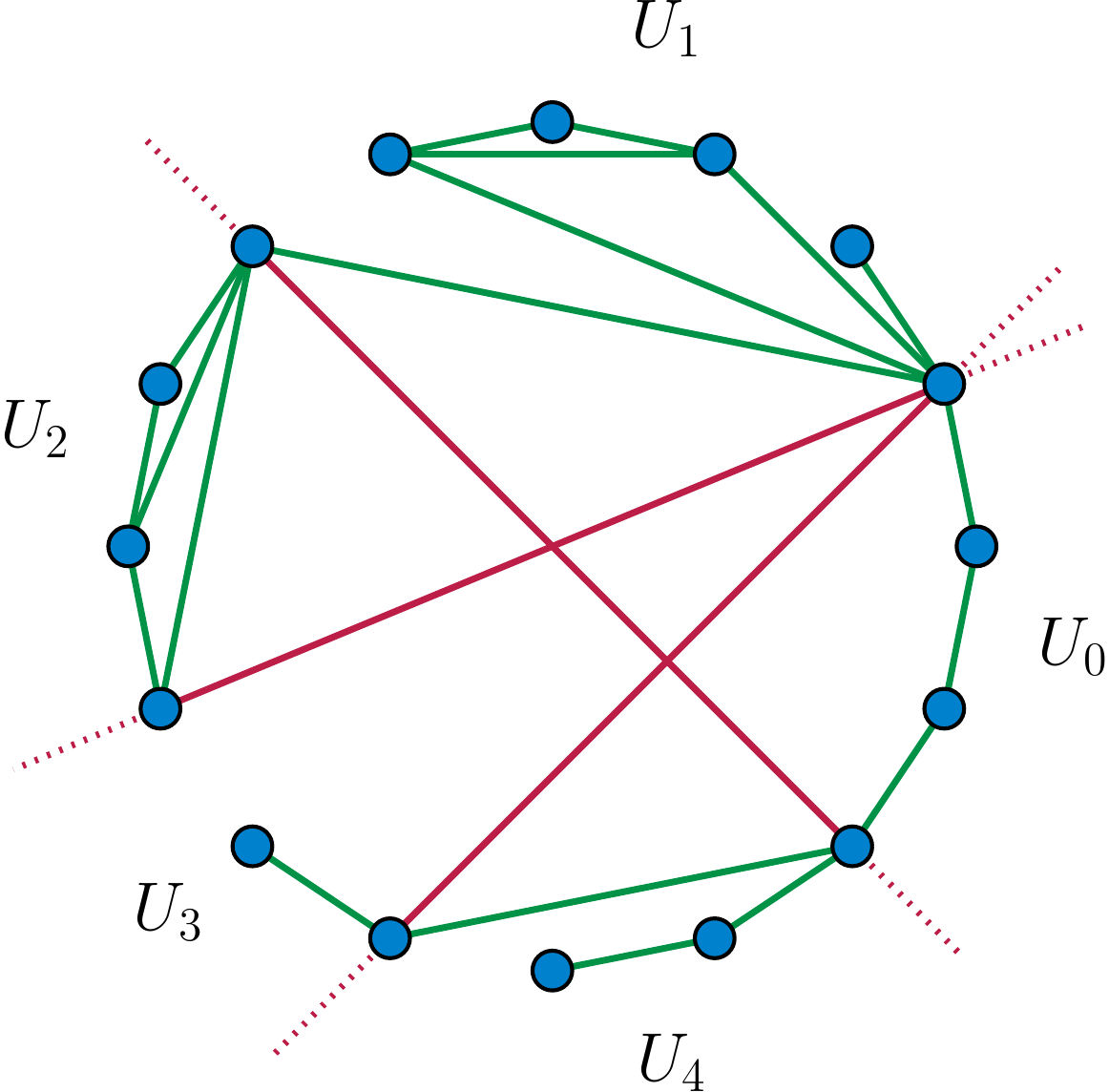}
\caption{A $1$-page drawing of a graph with two crossings and five outerplanar subgraphs.}
\label{fig:1-page-crossing}
\end{minipage}\hfill
\begin{minipage}[t]{0.57\linewidth}
\centering
\includegraphics[width=\textwidth]{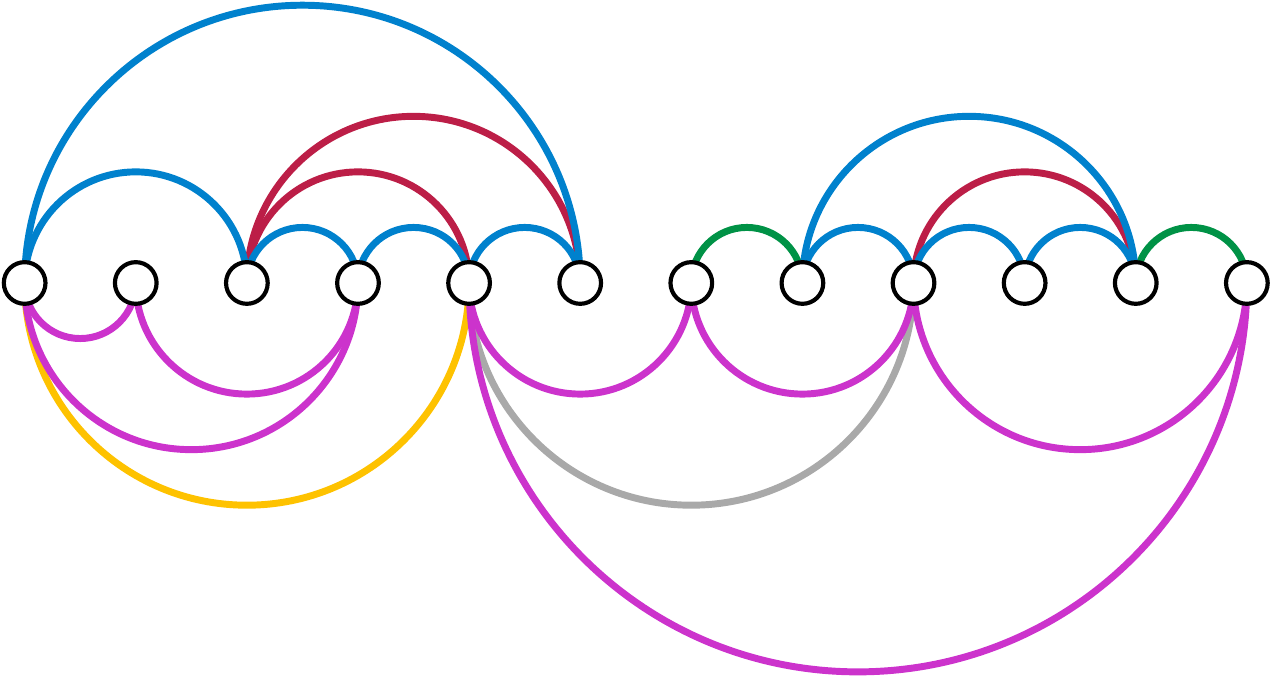}
\caption{A $2$-page planar graph with its edges partitioned into the six sets $A_b$ (green edges), $A_c$ (blue edges), $A_i$ (red edges), $B_b$ (yellow edges), $B_c$ (purple edges), and $B_i$ (gray edges).}
\label{fig:2-page-subsets}
\end{minipage}
\end{figure}

\begin{lemma}\label{lem:1-page-cross}
    A graph $G = (V,E)$ has $\pagecross_1(G)\le k$ if and only if there exist edges $F = \{e_0, \ldots, e_{r}\}$ with $r = O(k)$, vertices $W = \{v_0, \ldots, v_{\ell}\}$ with $\ell = O(k)$, and a partition $U_0, \ldots, U_{\ell}$ of $V \setminus W$ into (possibly empty) subsets, satisfying the following properties:
\begin{enumerate}
\item\label{cr1:endpoints} $W$ is the set of vertices incident to edges in $F$.
\item\label{cr1:induced} $F$ contains all edges in the induced subgraph on $W$.
\item\label{cr1:partition} There are no edges between $U_i$ and $U_j$ for $i\neq j$.
\item\label{cr1:outerplanar} There is an outerplanar embedding of the induced subgraph on $U_i \cup \{v_i, v_{i+1}\}$ with $v_i$ and $v_{i+1}$ adjacent for all $0 \leq i < \ell$.
\item\label{cr1:crossings} The edges in $F$ produce at most $k$ crossings when their endpoints (the vertices in~$W$) are placed in order according to their indices.
\end{enumerate}
\end{lemma}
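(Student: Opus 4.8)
The statement is an ``if and only if,'' and the reverse (reconstruction) direction is the routine one. Given $F$, $W=\{v_0,\dots,v_\ell\}$ and the partition $U_0,\dots,U_\ell$, I would place $v_0,\dots,v_\ell$ on the spine in index order, and for each $i$ insert the vertices of $U_i$ into the spine between $v_i$ and $v_{i+1}$ in the linear order read off from the outerplanar embedding provided by property~\ref{cr1:outerplanar} (the one in which $v_i$ and $v_{i+1}$ are consecutive on the outer face), drawing the edges of the induced subgraph on $U_i\cup\{v_i,v_{i+1}\}$ crossing-free inside that sub-interval. Since distinct pieces occupy disjoint spine intervals and property~\ref{cr1:partition} forbids edges between different $U_i$'s — so that, together with properties~\ref{cr1:endpoints} and~\ref{cr1:induced}, every edge of $G$ lies either in $F$ or inside a single piece — the only crossings in the resulting drawing are among edges of $F$, and property~\ref{cr1:crossings} bounds these by~$k$; hence $\pagecross_1(G)\le k$.

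For the forward direction, fix a $1$-page drawing $D$ of $G$ with at most $k$ crossings. Take $W$ to be the set of endpoints of crossed edges (so $|W|\le 4k$), set $F:=E(G[W])$, and order $W$ as $v_0,\dots,v_\ell$ along the spine of $D$; properties~\ref{cr1:endpoints} and~\ref{cr1:induced} are then immediate. To get $r=O(k)$ I would argue that the subdrawing of $G[W]$ has at most $k$ crossings on $O(k)$ vertices, and that a $1$-page drawing on $n$ vertices with $c$ crossings has fewer than $2n+c$ edges: planarize each crossing into a degree-four vertex, add the spine as a bounding cycle, and apply Euler's formula to the resulting planar graph. For property~\ref{cr1:crossings}, observe that when the vertices of $W$ are placed in their spine order, two edges of $F$ cross exactly when their endpoints interleave, and interleaving edges necessarily cross in $D$ (Jordan curve theorem); hence the number of crossings of $F$ in this placement is at most the number of crossings of $D$, i.e.\ at most~$k$. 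Finally, every edge of $G$ outside $F$ has an endpoint outside $W$, hence is uncrossed in $D$, so $G-W$ (the subgraph induced on $V\setminus W$) is outerplanar and each of its connected components lies inside a single face of the arrangement cut out of the half-plane by the crossed edges; this is what will produce the partition $U_0,\dots,U_\ell$ and property~\ref{cr1:partition}.

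The remaining, and principal, difficulty is to force this picture into the rigid form demanded by property~\ref{cr1:outerplanar}, in which each $U_i$ is bracketed by exactly two \emph{consecutive} vertices $v_i,v_{i+1}$ of $W$. In a general optimal drawing a component of $G-W$ can straddle several spine intervals and be attached to non-consecutive vertices of $W$. I would handle this by rearranging $D$ without increasing its crossing number: a component with no (or few) attachments to $W$ can be slid along the spine freely, and by a bounded number of such slides — together with, if necessary, promoting $O(k)$ further vertices into $W$ in a way that introduces no new crossed edges (so that $|F|=O(k)$ and the crossing bound are preserved) — one arranges that every component of $G-W$ occupies a single interval between two consecutive vertices of~$W$. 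Taking $U_i$ to be the union of the components in the $i$-th interval then yields property~\ref{cr1:partition}, and since $U_i\cup\{v_i,v_{i+1}\}$ is drawn crossing-free inside a sub-interval of the spine it automatically carries the outerplanar embedding required by property~\ref{cr1:outerplanar}. Verifying that this cleanup can always be carried out within the stated $O(k)$ bounds is the heart of the argument.
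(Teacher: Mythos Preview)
The paper does not actually prove this lemma. It is stated immediately after the one-paragraph intuition (``the set of crossing edges of the drawing partitions the halfplane into an arrangement of curves, and we can partition $G$ itself into the subgraphs that lie within each face''), and the text then moves straight on to building the $\MSO_2$ formula from it. So there is no proof in the paper to compare your proposal against beyond that informal picture.

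Your proposal is faithful to that picture. The reverse (reconstruction) direction is the obvious one and matches what the paper clearly intends. For the forward direction you make the natural choices ($W$ the endpoints of crossed edges, $F=E(G[W])$, $U_i$ the vertices in the $i$th spine interval), derive the $O(k)$ bounds via planarization and Euler's formula, and---more to the point---you correctly isolate the one genuine obstacle, which the paper's informal description glosses over: an uncrossed edge can jump from $U_i$ to $U_j$ or from $U_i$ to a non-adjacent $v_j$, so the naive interval partition need not satisfy Properties~\ref{cr1:partition}--\ref{cr1:outerplanar}. Your fix (slide components and promote $O(k)$ extra vertices into $W$) is in the right spirit but, as you say yourself, is only sketched; the paper offers no further detail here either.

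One caution that is really about the lemma statement rather than your argument: Properties~\ref{cr1:endpoints}--\ref{cr1:crossings} as written do not forbid an edge from $U_i$ to some $v_j$ with $j\notin\{i,i+1\}$, yet your reverse-direction claim that ``every edge of $G$ lies either in $F$ or inside a single piece'' needs exactly that. This looks like a small omission in the paper's formulation (the $\MSO_2$ formula $\beta_D$ built from it shares the same gap), so you should either add that condition explicitly or arrange the forward-direction cleanup so that it never arises.
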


We now construct a formula $\onepage_k$, based on \autoref{lem:1-page-cross}, such that $G \models \onepage_k$ if and only if $\pagecross_1(G)\le k$. The formula $\onepage_k$ will have the overall form of a disjunction, over all crossing configurations, of a conjunction of sub-formulas representing Properties \ref{cr1:endpoints}--\ref{cr1:outerplanar} in \autoref{lem:1-page-cross}. Property~\ref{cr1:crossings} will be represented implicitly, by the enumeration of crossing configurations.
The first three properties are easy to express directly: the formulas
\begin{align*}
\theta_1(W,F) &\equiv (\forall v)[v \in W \to (\exists e) [e \in F \wedge I(e,v)]]\\
\theta_2(F,W) &\equiv (\forall e)[(\forall v)[I(e,v) \to v \in W]\to e \in F] \\
\theta_3(U_i,U_j) &\equiv \neg (\exists e) (\exists u,v) [I(e,u) \wedge I(e,v) \wedge u \in U_i \wedge v \in U_j]
\end{align*}
express in $\MSO_2$ Properties \ref{cr1:endpoints}, \ref{cr1:induced}, and~\ref{cr1:partition} of \autoref{lem:1-page-cross} respectively.

To express Property~\ref{cr1:outerplanar} we first observe that it is equivalent to the property that the induced subgraph on $U_i \cup \{v_i,v_{i+1}\}$ with $v_i$ and $v_{i+1}$ identified (merged) to form a single supervertex is outerpalanar. That is, the requirement in Property~\ref{cr1:outerplanar} that vertices $v_i$ and $v_{i+1}$ be adjacent in the outerplanar embedding can be enforced by identifying the vertices. To express this property we need the following lemma, which can be proved in straightforward manner using the method of syntactic interpretations. (For details on this method see~\cite{EbbFluTho-1994,Gro-JCSS-2004}.)

\begin{lemma}\label{lem:mso-identify}
For every $\MSO_2$-formula $\phi$ there exists an $\MSO_2$-formula $\phi^*(v_1,v_2)$ such that $G \models \phi^*(a,b)$ if and only if $G/a\sim b \models \phi$, where $G/a\sim b$ is the graph constructed from $G$ by identifying vertices $a$ and $b$.
\end{lemma}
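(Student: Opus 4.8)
\subsection*{Proof plan for \autoref{lem:mso-identify}}

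The plan is to prove this as a direct application of the method of syntactic interpretations (\emph{relativization}): I will show how to simulate the structure $G/a\sim b$ inside $G$, so that any $\MSO_2$ sentence about $G/a\sim b$ translates into an $\MSO_2$ formula about $G$ with two extra free vertex variables that will be bound to $a$ and $b$. Designate $v_1$ as the surviving copy of the merged vertex and think of $v_2$ as deleted. Under this convention the vertices of $G/a\sim b$ correspond to the vertices of $G$ other than $v_2$; the edges of $G/a\sim b$ correspond bijectively to the edges of $G$; equality and set membership are inherited unchanged once $v_2$ is discarded; subsets of $V(G/a\sim b)$ correspond to subsets of $V(G)$ avoiding $v_2$; edge subsets correspond directly; and an edge $e$ is incident in $G/a\sim b$ to the image of a vertex $w\neq v_2$ exactly when, in $G$, either $\inc(e,w)$ holds, or $w=v_1$ and $\inc(e,v_2)$ holds.

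Given this dictionary I would define a translation $\psi\mapsto\psi^{\dagger}$ by structural induction on the subformulas $\psi$ of $\phi$, carrying $v_1$ and $v_2$ along as parameters. An atomic incidence $\inc(e,w)$ becomes $\inc(e,w)\vee(w=v_1\wedge\inc(e,v_2))$; atomic equalities and membership predicates are left unchanged, since the relativization below guarantees that no vertex variable ever takes the value $v_2$; Boolean connectives commute with $\dagger$; a vertex quantifier $\exists w\,\psi$ becomes $\exists w\,(w\neq v_2\wedge\psi^{\dagger})$ and $\forall w\,\psi$ becomes $\forall w\,(w\neq v_2\to\psi^{\dagger})$; a vertex-set quantifier $\exists X\,\psi$ becomes $\exists X\,((\forall w)(w\in X\to w\neq v_2)\wedge\psi^{\dagger})$, and dually for $\forall X$; edge and edge-set quantifiers are left untouched. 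Finally set $\phi^{*}(v_1,v_2):=(v_1=v_2\wedge\phi)\vee(v_1\neq v_2\wedge\phi^{\dagger})$, the first disjunct dispatching the degenerate case $a=b$, in which $G/a\sim b=G$.

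Correctness then follows by a routine induction on $\psi$: under the bijection between the elements of $G/a\sim b$ and the surviving elements of $G$ described above, each atomic subformula of $\psi$ holds in $G/a\sim b$ iff its translation holds in $G$ with $v_1=a$ and $v_2=b$ — this is exactly how incidence, equality, and membership were reinterpreted — and the inductive step for quantifiers is immediate because the relativizing guards restrict the $G$-quantifiers to range over precisely the surviving elements and the surviving subsets. Since each step of the translation increases formula length by only an additive constant, $\phi^{*}$ has length $O(|\phi|)$; in particular it is a legitimate $\MSO_2$ formula, as required (and the bounded blowup is what one needs to apply Courcelle's theorem afterwards).

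I do not expect a genuine obstacle here; the only point worth a sentence of care in the write-up is that identifying $a$ and $b$ may create a self-loop (if $ab\in E(G)$) or parallel edges (if $a$ and $b$ share a neighbor). This is harmless: $\MSO_2$ and the incidence reinterpretation above make sense verbatim on multigraphs with loops, and these artifacts do not affect the graph properties — such as outerplanarity — for which we invoke the lemma. Everything else is bookkeeping of the standard syntactic-interpretation form (see~\cite{EbbFluTho-1994,Gro-JCSS-2004}).
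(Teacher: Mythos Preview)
Your proposal is correct and follows exactly the approach the paper indicates: the paper does not spell out a proof but simply states that the lemma ``can be proved in a straightforward manner using the method of syntactic interpretations,'' citing \cite{EbbFluTho-1994,Gro-JCSS-2004}. Your write-up is a faithful and more detailed instantiation of that method, so it matches the paper's intended argument.
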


Now, to construct $\theta_4(U_i, v_i, v_j)$ we first modify the formula $\outerplanar$ by restricting its quantifiers to only quantify over vertices (and sets of vertices) in $U_i \cup \{v_i,v_j\}$ and edges (and sets of edges) between these vertices. This modified formula describes the outerplanarity of $U_i\cup\{v_i,v_j\}$. We then apply the transformation of \autoref{lem:mso-identify} to produce the formula $\theta_4(U_i, v_i, v_j)$, expressing the outerplanarity of the induced graph on $U_i \cup \{v_i, v_j\}$ with $v_i$ and $v_j$ identified.

\autoref{lem:crossing-diagrams} tells us that there are $2^{O(k^2)}$ ways of satisfying Property~\ref{cr1:crossings} of \autoref{lem:1-page-cross}. For each crossing diagram $D$ with $k$ crossings we can construct a formula $\alpha_D(v_0,\ldots, v_\ell, e_0,\ldots,e_r)$ specifying that the vertices $v_0, \ldots, v_\ell$ and edges $e_0,\ldots,e_r$ are in configuration $D$. We then construct the formula
\begin{multline*}
\beta_D \equiv (\exists v_0,\ldots v_\ell)(\exists e_0,\ldots,e_r)(\exists U_0, \ldots, U_\ell)\\
\Big[\alpha_D(v_0,\ldots, v_\ell, e_0,\ldots,e_r) \wedge \bigcup_0^\ell U_i = V \setminus \{v_0,\ldots,v_\ell\} \wedge \bigwedge_{i\neq j} U_i \cap U_j = \emptyset \\
\wedge \theta_1(v_0,\ldots,v_\ell; e_0,\ldots,e_r) \wedge \theta_2(e_0,\ldots,e_r; v_0,\ldots,v_\ell)
\wedge \bigwedge_{i\neq j} \theta_3(U_i,U_j)
\wedge \bigwedge_{i=0}^\ell \theta_4(U_i, v_i, v_{i+1}) \Big]
\end{multline*}
of length $O(k^2)$. This formula expresses the property that, in the given graph $G$, we can construct a crossing diagram of type $D$, and a corresponding partition of the vertices into subsets $U_i$, that obeys Properties \ref{cr1:endpoints}--\ref{cr1:outerplanar} of \autoref{lem:1-page-cross}. By \autoref{lem:1-page-cross}, this is equivalent to the property that $G$ has a $1$-page drawing with $k$ crossings in configuration $D$. Finally, we construct $\onepage_k$ by taking the disjunction of the $\beta_D$ where $D$ ranges over all crossing diagrams with $\le k$ crossings. Thus, $\onepage_k$ is a formula of length $2^{O(k^2)}$, expressing the property that $\pagecross_1(G)\le k$.

\begin{theorem}
There exists a computable function $f$ such that $\pagecross_1(G)$ can be computed in $O(f(k)n)$ time for a graph $G$ with $n$ vertices and with $k=\pagecross_1(G)$.
\end{theorem}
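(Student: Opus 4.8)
\section*{Proof proposal}

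The plan is to reduce the exact computation of $\pagecross_1(G)$ to a sequence of decision problems ``is $\pagecross_1(G)\le k$?'' for $k=0,1,2,\dots$, to solve each in $O(g(k)\,n)$ time for some computable $g$, and to stop at the first $k$ that answers yes. That value of $k$ is exactly $\pagecross_1(G)$, and summing the running times over $k=0,\dots,\pagecross_1(G)$ yields the claimed bound with $f(k):=\sum_{j=0}^{k}g(j)$, which is a computable function. Note that we cannot assume $k$ is known in advance, so the incremental search is what lets the treewidth-based machinery be applied at all.

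To decide whether $\pagecross_1(G)\le k$ in $O(g(k)\,n)$ time I would first apply Bodlaender's fixed-parameter-tractable treewidth algorithm~\cite{Bod-STOC-1993} to test whether $\treewidth(G)\le c\sqrt{k}$, where $c$ is the constant hidden in the $O(\sqrt{\pagecross_1(G)})$ bound of \autoref{lem:1-page-crossing-treewidth}; this runs in time $h(k)\,n$ for a computable $h$. If the test fails, then by the contrapositive of \autoref{lem:1-page-crossing-treewidth} we have $\pagecross_1(G)>k$, and we answer no. Otherwise Bodlaender's algorithm also produces a tree decomposition of width $O(\sqrt{k})$; in particular $G$ is sparse, with $m=O(\sqrt{k}\,n)$. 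We then invoke Courcelle's theorem on the formula $\onepage_k$ constructed above, whose length is $2^{O(k^2)}$ and which satisfies $G\models\onepage_k$ if and only if $\pagecross_1(G)\le k$. Since both the treewidth and the formula length are bounded by computable functions of $k$, and $n+m=O(\sqrt{k}\,n)$, Courcelle's theorem runs in time $O\bigl(f'(c\sqrt{k},\,2^{O(k^2)})\cdot(n+m)\bigr)=O(g(k)\,n)$ for a computable $g$.

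I do not expect a serious obstacle here, since the two substantive ingredients --- the structural characterization \autoref{lem:1-page-cross} underlying $\onepage_k$, and the treewidth bound \autoref{lem:1-page-crossing-treewidth} --- are already in hand. The only points needing care are bookkeeping: we must perform the treewidth test and bail out \emph{before} calling Courcelle's theorem, so that the treewidth parameter governing its running time stays bounded by a function of $k$; we must fold the edge count $m$ into the $O(f(k)\,n)$ form using sparsity of bounded-treewidth graphs; and we must verify that the iterative search terminates at precisely $k=\pagecross_1(G)$ and that the cumulative time telescopes to $O(f(k)\,n)$ for the computable $f$ defined above.
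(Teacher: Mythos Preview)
Your proposal is correct and follows essentially the same approach as the paper: combine the $\MSO_2$ formula $\onepage_k$ with the treewidth bound of \autoref{lem:1-page-crossing-treewidth} and invoke Courcelle's theorem. The paper's own proof is a two-sentence sketch that omits the algorithmic details you spell out (the incremental search over $k$, the preliminary Bodlaender treewidth test to ensure Courcelle's running time is bounded in terms of $k$, and the $m=O(\sqrt{k}\,n)$ sparsity observation), so your write-up is in fact more careful than the original.
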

\begin{proof}
We have shown the existence of a formula $\onepage_k$ such that a graph $G \models \onepage_k$ if and only if $\pagecross_1(G)\le k$. By \autoref{lem:1-page-crossing-treewidth}, the treewidth of any graph with crossing number $k$ is $O(k)$. Applying Courcelle's theorem with the formula $\onepage_k$ and the $O(k)$ treewidth bound, it follows that computing $\pagecross_1(G)$ is fixed-parameter tractable in~$k$ .
\end{proof}

\section{$2$-page planarity}

A classical characterization of the graphs with planar $2$-page drawings is that they are exactly the subhamiltonian planar graphs:

\begin{lemma}[Bernhart and Kainen~\cite{BerKai-JCT-1979}]
\label{lem:subhamiltonian}
A graph is $2$-page planar if and only if it is the subgraph of planar Hamiltonian graph.
\end{lemma}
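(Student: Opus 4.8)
The plan is to reduce the statement to the purely combinatorial description of crossing-free $2$-page drawings already recalled in the introduction: such a drawing is the same data as a linear (equivalently, since we may close the spine into a circle, \emph{cyclic}) order of $V(G)$ together with a $2$-coloring of $E(G)$ in which no two edges of the same color \emph{interleave}, i.e., have endpoints that alternate around the circle. Under this dictionary, ``drawn inside the circle'' and ``drawn outside the circle'' correspond to the two colors, and a Hamiltonian cycle through the vertices in cyclic order corresponds to the circle itself.

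For the forward implication, suppose $G$ is $2$-page planar, and fix such a cyclic order $v_0,\dots,v_{n-1}$ and a valid $2$-coloring of $E(G)$. Let $H$ be obtained from $G$ by adding every missing edge joining consecutive vertices $v_iv_{i+1}$ (indices taken mod $n$); since a consecutive pair interleaves no other pair, giving these new edges arbitrary colors keeps the coloring valid. Then $H$ contains the Hamiltonian cycle $C=v_0v_1\cdots v_{n-1}v_0$, and a planar drawing of $H$ is obtained by drawing $C$ as a circle, each color-$1$ edge not on $C$ as a chord inside the circle, and each color-$2$ edge not on $C$ as an arc outside it: no two inside edges cross, because no two color-$1$ edges interleave and two chords of a circle cross iff their endpoints interleave; symmetrically for the outside; and an inside edge never crosses an outside edge or $C$. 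Hence $G$ is a subgraph of the planar Hamiltonian graph $H$.

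For the reverse implication, suppose $G\subseteq H$ with $H$ planar and Hamiltonian, and fix a planar embedding of $H$ (say in the sphere $S^2$) and a Hamiltonian cycle $C$ of $H$. By the Jordan--Schoenflies theorem, $C$ is a simple closed curve whose complement in $S^2$ consists of two open disks with common boundary $C$, and traversing $C$ induces a cyclic order on $V(H)$. Every edge $e\notin C$ has both endpoints on $C$ and, because edges of a planar embedding meet only at shared endpoints, the interior of $e$ is disjoint from $C$ and hence lies in exactly one of the two disks; this $2$-colors $E(H)\setminus E(C)$. Moreover two same-disk edges cannot interleave along $C$: four boundary points of a disk occurring in the cyclic order $a,c,b,d$ force the arcs $ab$ and $cd$ to meet in the interior, contradicting planarity. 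Placing $V(H)$ on the spine in the cyclic order of $C$, putting each color-$i$ edge on page $i$, and placing each edge of $C$ on either page (it interleaves nothing), we obtain a crossing-free $2$-page drawing of $H$; deleting the edges not in $G$ yields one of $G$.

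The routine part is the forward direction; the one point that deserves care is the topological input to the reverse direction, namely (i) identifying the two complementary regions of $C$ as disks carrying the vertices on their boundary in $C$'s cyclic order, and (ii) the observation that two arcs inside a disk with cyclically interleaved endpoints must cross. Both are immediate consequences of the Jordan curve and Jordan--Schoenflies theorems, but they are exactly where the argument stops being combinatorial, so I would state them explicitly rather than gloss over them. Small degenerate cases ($n\le 2$, or multigraphs, which do not concern us here) can be handled separately.
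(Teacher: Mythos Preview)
Your proof is correct and is essentially the standard argument for this classical result. Note, however, that the paper does not supply its own proof of this lemma at all: it is simply quoted from Bernhart and Kainen~\cite{BerKai-JCT-1979} and used as a black box (in particular, in the proof of \autoref{lem:2-page-planar}). So there is nothing in the paper to compare your argument against, but what you wrote is exactly the expected justification and would serve perfectly well as a self-contained proof.
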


However, this characterization does not directly help us to construct an $\MSO_2$-formula expressing the $2$-page planarity of a graph, as we do not know how to construct a formula that asserts the existence of a supergraph with the given property. Hamiltonicity and planarity are both straightforward to express in $\MSO_2$, but there is no obvious way to describe a set of edges that may be of more than constant size, is not a subset of the existing edges, and can be used to augment the given graph to form a planar Hamiltonian graph.

For this reason we provide a new characterization, which we model on a standard characterization of planar graphs: a graph is planar if and only if, for every cycle $C$, the flaps of $C$ can be partitioned into two subsets (the interior and exterior of $C$) such that no two flaps in the same subset cross each other. For instance, this characterization has been used as the basis for a cubic-time divide and conquer algorithm for planarity testing, which recursively subdivides the graph into cycles and non-crossing subsets of flaps~\cite{AusPar-JMM-1961,Gol-GCC-1963,Shi-PhD-1969}. In our characterization of $2$-page graphs, we apply this idea to a special set of cycles, the boundaries of maximal regions within each halfplane that are separated from the spine of a 2-page book embedding by the edges of the embedding. The cycles of this type are edge-disjoint, and if a single cycle of this type has been identified then its interior flaps can also be identified easily: each interior flap is a single edge, and an edge forms an interior flap if and only if it belongs to the same page as the cycle in the book embedding and has both its endpoints on the cycle. As well as identifying which of the two pages each edge of a given graph is assigned to, our $\MSO_2$ formula will partition the edges into three different types of edge: the ones that belong to these special cycles, the ones that form interior flaps of these special cycles, and the remaining \emph{isthmus} edges that, if deleted, would disconnect parts of their page.

Suppose we are given a graph $G = (V,E)$ and a partition of its edges into two subsets $A,B$, intended to represent the two pages of a $2$-page drawing of~$G$. 
We define the graph $\separate(G;A,B)$ that splits each vertex of $G$ into two vertices, one in each page, with a new edge connecting them.
Thus, $\separate(G;A,B)$ has $2n$ vertices, which can be labeled by pairs of the form $(v,X)$ where $v$ is a vertex in $V$ and $X$ is one of the two sets in $A,B$. It has an edge between $(x,X)$ and $(y,Y)$ if either of two conditions is met: (1) $x=y$ and $X \neq Y$, or (2) $X=Y$ and there is an edge between $x$ and $y$ in $X$. See \autoref{fig:2-page-sep} for an illustration of the $\separate(G;A,B)$ construction.

\begin{lemma}\label{lem:2-page-planar}
A graph $G = (V,E)$ is $2$-page planar if and only if there exists a partition $A_b$, $A_c$, $A_i$, $B_b$, $B_c$, $B_i$ of $E$ into six subsets such that, for each of the two choices of $X=A$ and $X=B$, these subsets satisfy the following properties: 
\begin{enumerate}
\item\label{cr2:cycles} $X_c$ is a union of edge-disjoint cycles.
\item\label{cr2:isthmuses} $X_c\cup X_b$ does not contain any additional cycles that involve edges in $X_b$.
\item\label{cr2:inner} For every edge $e$ in $X_i$ there exists a cycle in $X_c$ containing both endpoints of~$e$.
\item\label{cr2:outerplanar} The graph formed by the edges $X_i \cup X_c \cup X_b$ is outerplanar.
\item\label{cr2:paths} For each cycle $C$ in $X_c$ it is not possible to find two vertex-disjoint paths $P_1$ and $P_2$ in $E$ such that neither path is a single edge in $X_i$, all four path endpoints are distinct vertices of $C$,
neither path contains a vertex of $C$ in its interior, and the two pairs of path endpoints are in crossing position on~$C$.
\item\label{cr2:planar} The subdivision $\separate(G;A_b \cup A_c \cup A_i, B_b \cup B_c \cup B_i)$ is planar.
\end{enumerate}
\end{lemma}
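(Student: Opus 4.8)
The plan is to prove the two directions separately: the forward one by reading the six sets off a given drawing, and the reverse one by reconstructing a drawing from the combinatorial data.

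For the forward direction, fix a crossing-free $2$-page drawing of $G$, with the spine on a line, page $A$ in the upper half-plane and page $B$ in the lower half-plane. In each half-plane the edges of that page form a non-crossing family of arcs, i.e., an outerplanar graph embedded with its vertices in spine order. For $X\in\{A,B\}$ I would define $X_c$ to be the union, over the $2$-connected blocks of page $X$, of the (unique) Hamiltonian cycle of each block; $X_i$ to be the set of chords of those blocks; and $X_b$ to be the bridge edges of page $X$. Geometrically, each such Hamiltonian cycle is the boundary of one of the maximal regions of the $X$-half-plane cut off from the spine by the edges of page $X$, and the chords of a block are exactly the edges drawn inside the corresponding region. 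With these definitions, Properties~\ref{cr2:cycles}--\ref{cr2:outerplanar} are immediate from the block structure and the outerplanarity of each page: distinct blocks share no edges, so the cycles are edge-disjoint; the bridges cannot lie on a cycle of $X_c\cup X_b$; the chords of a block have both endpoints on its Hamiltonian cycle; and $X_i\cup X_c\cup X_b$ is all of page $X$, which is outerplanar. Property~\ref{cr2:planar} follows by ``pulling the spine apart'': replace the spine line by two parallel lines $y=\pm\varepsilon$, move each vertex's two copies to these lines, and join them by a short vertical segment; since page $A$ lives in $y>\varepsilon$, page $B$ in $y<-\varepsilon$, and the vertical segments are disjoint, this is a plane drawing of $\separate(G;A,B)$. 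Property~\ref{cr2:paths} is the only one that uses planarity of the whole drawing: for a block cycle $C$, the region it bounds in the $X$-half-plane is a disk meeting $G$ only in the chords of that block, so every other flap of $C$ must be drawn outside $C$, and hence no two such flaps can be in crossing position on $C$.

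For the reverse direction I would reconstruct a $2$-page drawing from a partition satisfying Properties~\ref{cr2:cycles}--\ref{cr2:planar}. First use Property~\ref{cr2:planar} to obtain a plane embedding of $\separate(G;A,B)$ and contract the splitting edges; the result is a plane embedding $\Pi$ of $G$ in which the rotation at every vertex has all of its page-$A$ edges consecutive followed by all of its page-$B$ edges. Next I would use Properties~\ref{cr2:cycles}--\ref{cr2:paths} to massage $\Pi$ so that each cycle $C$ of $A_c\cup B_c$ bounds a face of $\Pi$ whose inner side carries exactly the $X_i$-chords having both endpoints on $C$ (the cycles of $A_c\cup B_c$ are pairwise edge-disjoint: within a page by Property~\ref{cr2:cycles}, and across the two pages because $A$ and $B$ are disjoint). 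Property~\ref{cr2:inner} identifies which flaps of $C$ are these chords, and Property~\ref{cr2:paths}, together with the flap characterization of planarity recalled before the statement, lets us route every other flap of $C$ outside $C$. With these faces carved out, the remaining structure of each page is, by Property~\ref{cr2:outerplanar}, an outerplanar graph whose blocks are precisely the cycles of $X_c$ with their $X_i$-chords, strung together along the $X_b$-bridges (whose bridge status is Property~\ref{cr2:isthmuses}); this lets me read a spine off $\Pi$ as a simple curve through all vertices that stays outside every carved-out face and keeps page-$A$ edges on one side and page-$B$ edges on the other. Unrolling the spine to a line and drawing each page in its half-plane yields a crossing-free $2$-page drawing of $G$. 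Alternatively one can package the reconstruction as a planar Hamiltonian supergraph of $G$ and invoke \autoref{lem:subhamiltonian}.

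The step I expect to be the main obstacle is this last reconstruction, specifically showing that the two pages can be given a \emph{common} spine order. Property~\ref{cr2:planar} alone only yields the weaker ``page-consecutive rotations'' condition, which is not sufficient: there are partitions---for instance of $K_4$ into $A=\{12,34\}$ and $B=\{13,14,23,24\}$---for which $\separate(G;A,B)$ is planar and yet no spine order makes both pages crossing-free, and it is Properties~\ref{cr2:cycles}--\ref{cr2:paths} (here Property~\ref{cr2:paths} applied to the $4$-cycle $B$) that rule such cases out. So the crux is to prove that the faces carved out from $A_c$ and $B_c$ can be nested compatibly in the two half-planes and that the resulting cyclic vertex orders of the two pages are mutual reverses, which is exactly what a legal spine demands. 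A secondary technical point, to be handled carefully in both directions, is the precise correspondence between the abstract cycles of $X_c$ and the geometric ``separated regions'', including degenerate cases such as empty $X_b$, trivial or vertex-sharing cycles in $X_c$, and isolated vertices of $G$.
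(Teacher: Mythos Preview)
Your forward direction is fine and matches the paper's: the ``maximal regions separated from the spine'' are exactly the $2$-connected blocks of each page, and your verification of Properties~\ref{cr2:cycles}--\ref{cr2:planar} is the same as theirs.

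For the reverse direction you have correctly located the crux---producing a \emph{single} spine order compatible with both pages---but you have not resolved it, and the paper's resolution is different from what you sketch. You propose to contract the splitting edges of $\separate(G;A,B)$ first, obtain an embedding $\Pi$ of $G$ with page-consecutive rotations, flip flaps so that each $X_c$-cycle bounds a disk containing only its $X_i$-chords, and then ``read a spine off $\Pi$ as a simple curve through all vertices that stays outside every carved-out face.'' But the existence of such a curve is precisely the statement that $\Pi$ is subhamiltonian with the Hamiltonian cycle respecting the $A/B$ split; asserting it at this point is circular, and your paragraph on the $K_4$ example shows you already sense this.

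The paper's idea is to contract in the \emph{opposite} order. Keep the planar embedding of $\separate(G;A,B)$, perform the same flipping so that the $X_c$-cycles separate $X_i$ from the rest (this is where Properties~\ref{cr2:outerplanar} and~\ref{cr2:paths} are used), and then contract, within each page, all of $X_c\cup X_b$ (and hence the enclosed $X_i$ as well). Because after flipping no contracted cycle surrounds anything uncontracted, what survives is a planar bipartite multigraph whose only edges are the splitting edges $(v,A)\text{--}(v,B)$. Bipartiteness makes every dual vertex even, so the planar dual has an Euler tour, which can be locally uncrossed to a non-self-crossing closed curve $J$ meeting each splitting edge exactly once. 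Now contract the splitting edges: $J$ becomes a Jordan curve through every vertex of $G$, giving a Hamiltonian cycle in a planar supergraph of $G$, and \autoref{lem:subhamiltonian} finishes. This Euler-tour-of-the-dual step is the missing ingredient in your argument; it is what converts the combinatorial conditions into an actual spine order.
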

\newcommand{\lemtwopageplanar}{
Suppose $G$ has a $2$-page planar drawing. This drawing partitions the edges of $G$ into two sets $A$ and $B$. For $X = A$ or $B$, let $X_c$ be the set of edges $X$ forming a union of edge disjoint cycles that surround a maximal subset of their page. Then let $X_i$ be the edges in $X$ drawn in the interior of one of these cycles, and $X_b$ the remaining edges in $X$. It can be easily verified that the constructed partition satisfies Properties \ref{cr2:cycles} through~\ref{cr2:planar}.

Conversely, suppose we have a graph $G$ with a partition of its edges satisfying the properties of the lemma. By Property~\ref{cr2:planar}, $\separate(G;A_b \cup A_c \cup A_i, B_b \cup B_c \cup B_i)$ has a planar embedding. We may assume without loss of generality that, in this embedding, the cycles of $X_c$ given by Property~\ref{cr2:cycles}
separate the edges of $X_i$ (interior to the cycles) from the rest of the graph (exterior to the cycles).
For, by Property~\ref{cr2:outerplanar}, no two interior edges can cross, and by Property~\ref{cr2:paths}, no two exterior paths can cross. So, if we have a cycle in $X_c$ that does not properly separate $X_i$ from the rest of the graph, we may modify the embedding to flip the edges of $X_i$ into the interior of the cycle and to flip the components of the rest of the graph to the exterior of the cycle, preserving the (reflected) planar embedding of each flipped component, without introducing any new crossings. By performing this flipping operation to all cycles of $A_c$ and $B_c$, we obtain an embedding in which the cycles of $X_c$ separate $X_i$ from the rest of the graph, as stated above.

Next, given this embedding of $\separate(G;A_b \cup A_c \cup A_i, B_b \cup B_c \cup B_i)$, we contract all of the cycles ($X_c$) and isthmuses ($X_b$) in each page ($X = A$ and $B$), maintaining the orientation of the edges that were not contracted. As a consequence, the edges in $X_i$  within each cycle of $X_c$ are also contracted. However, in the embedding of $\separate(G;A_b \cup A_c \cup A_i, B_b \cup B_c \cup B_i)$, none of the contracted cycles surrounds any part of the graph that is not itself contracted. As a result, we are left with an embedding of a planar embedded bipartite multigraph that has one edge $(v,A)$--$(v,B)$ for each vertex $v$ in the original graph. Because this multigraph is bipartite, its dual graph has even degree at every vertex, and as the dual graph of a planar graph it is necessarily connected. Thus, the dual of the bipartite multigraph has an Euler tour, and (as with any Eulerian planar graph) this Euler tour can be made non-self-crossing by local uncrossing operations at each vertex. This tour can be represented geometrically as a Jordan curve $J$ that passes through the faces of the embedding of $\separate(G;A_b \cup A_c \cup A_i, B_b \cup B_c \cup B_i)$ (in some cases more than once per face) and crosses each edge $(v,A)$--$(v,B)$ exactly once.

From the embedding of $\separate(G;A_b \cup A_c \cup A_i, B_b \cup B_c \cup B_i)$ we can obtain a planar embedding of $G$ itself by contracting all the edges of the form $(v,A)$--$(v,B)$.
If we augment $G$ by adding an edge $uv$ between any two vertices $u$ and $v$ whose edges $(u,A)$--$(u,B)$ and $(v,A)$--$(v,B)$ are crossed consecutively by the Jordan curve $J$, then $J$ can be used to guide a non-crossing placement of these additional edges within the resulting embedding of~$G$. Thus, we have augmented $G$ to a Hamiltonian planar supergraph.
The Jordan curve passing through these contracted edges gives us a routing of a set of pairs

The planar dual of this graph has an Euler tour, as the primal graph is bipartite. This tour corresponds to Hamiltonian cycle in a planar supergraph of $G$, where edges are added between vertices if the edge does not already exist. The result that $G$ has a $2$-page book embedding follows by \autoref{lem:subhamiltonian}.
}


\begin{figure}[ht]
\centering
\includegraphics[width=0.5\textwidth]{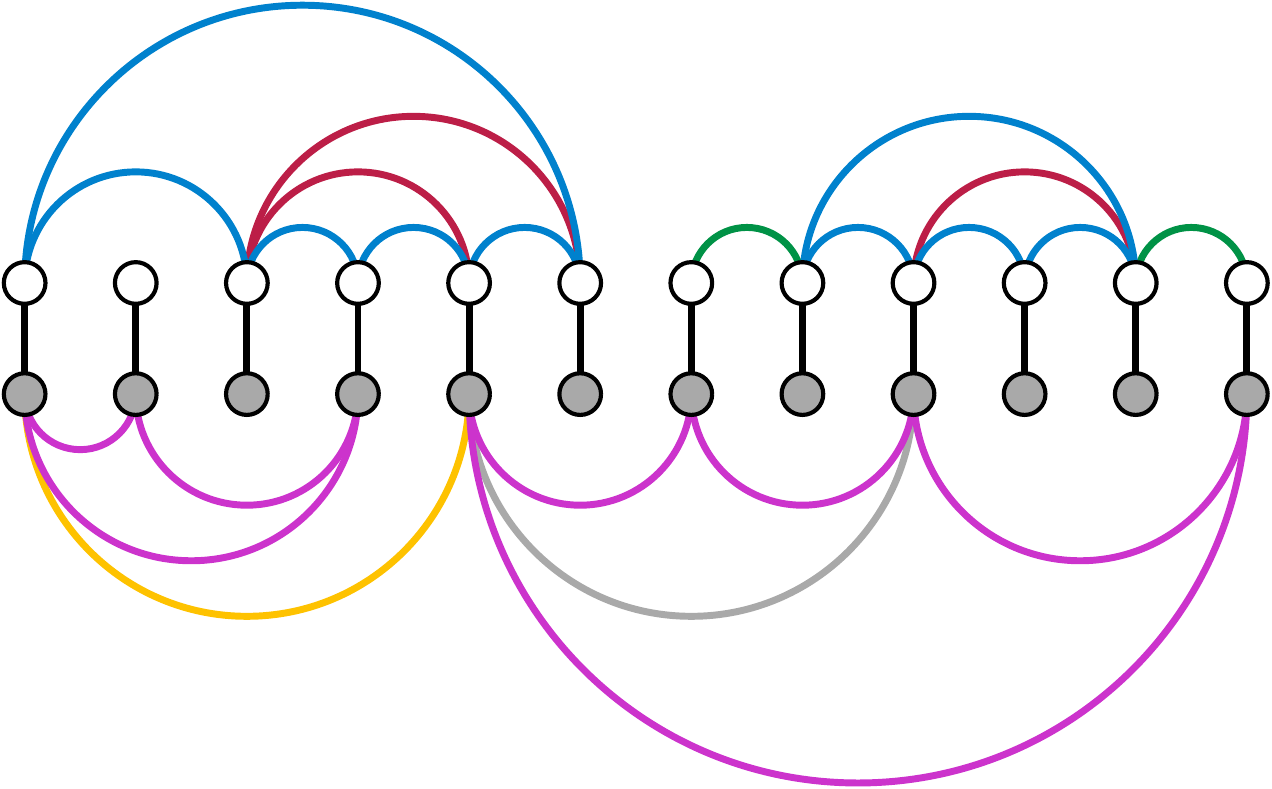}
\caption{The graph $\separate(G;A,B)$ where $G$ is the graph in \autoref{fig:2-page-subsets}, $A$ and $B$ are respectively the edges in the first and second page.}
\label{fig:2-page-sep}
\end{figure}

\autoref{fig:2-page-sep} illustrates the division of edge into six subsets described in \autoref{lem:2-page-planar}.
For the proof of  \autoref{lem:2-page-planar}, see
\ifFull
Appendix~\ref{sec:proof}.
\else
the full version of this paper.
\fi

We construct a formula $\twopage$ based on \autoref{lem:2-page-planar} with the property that $G \models \twopage$ if and only if $G$ is $2$-page planar. First, we construct formulas $\theta_1, \ldots, \theta_5$ expressing Properties \ref{cr2:cycles} through~\ref{cr2:paths} in \autoref{lem:2-page-planar}, as we did for $1$-page crossing; each of these properties has a straightforward expression in $\MSO_2$. To express Property~\ref{cr2:planar} we will need the following technical lemma, which can be proved using the method of syntactic interpretations.

\begin{lemma}\label{lem:mso-sep}
For every $\MSO_2$-formula $\phi$ there exists an $\MSO_2$-formula $\phi^*(A,B)$ such that $G \models \phi^*(A,B)$ if and only if $\separate(G;A,B) \models \phi$.
\end{lemma}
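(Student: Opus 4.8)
The plan is to exhibit $\separate(G;A,B)$ as the image of $(G,A,B)$ under a fixed monadic second-order transduction and then invoke the standard backwards-translation property of such transductions --- the method of syntactic interpretations; see \cite{EbbFluTho-1994,Gro-JCSS-2004} and \cite{Courcelle-Book}. Concretely, I would first record how the vertices, edges, and incidence relation of $\separate(G;A,B)$ are encoded inside $(G,A,B)$. The vertex set $V\times\{A,B\}$ is encoded by two ``copies'' of $V$, indexed by the two symbols $A$ and $B$, so a subset of vertices of $\separate(G;A,B)$ corresponds to a pair $(S_A,S_B)$ of subsets of $V$. The edge set of $\separate(G;A,B)$ is the disjoint union of $A$ (drawn in the first copy), $B$ (drawn in the second copy), and one fresh ``matching'' edge for each $v\in V$; thus a subset of edges of $\separate(G;A,B)$ corresponds to a triple $(T_A,T_B,M)$ with $T_A\subseteq A$, $T_B\subseteq B$, and $M\subseteq V$. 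A single vertex of $\separate(G;A,B)$ is a vertex of $G$ together with a choice of copy (two cases), and a single edge of $\separate(G;A,B)$ is either an edge of $A$, an edge of $B$, or a vertex of $G$ used as a matching edge (three cases).

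Next I would translate $\phi$ by structural induction, maintaining the invariant that every free variable of the subformula under consideration has been replaced by its encoding as above. Atomic formulas are rewritten directly from the definition of $\separate(G;A,B)$: equality of vertices becomes equality in $G$ together with agreement of the copy index; equality of edges becomes a case analysis over the three edge types; set membership becomes a lookup in the appropriate component of the encoding; and incidence $\inc(e,x)$ becomes, case by case, either the incidence relation of $G$ (when $e$ lies in the same copy as $x$), the condition that $x$ is one of the two copies of the vertex underlying the matching edge $e$, or falsehood. Propositional connectives are preserved verbatim. A set quantifier over vertices of $\separate(G;A,B)$ is replaced by two set quantifiers over $V$ (one per copy); a set quantifier over edges is replaced by two set quantifiers over $E$, relativised to $A$ and to $B$ respectively, plus one set quantifier over $V$ for the matching edges. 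An element quantifier is replaced by a finite disjunction (for $\exists$) or conjunction (for $\forall$) over the two, respectively three, possible types of the element, each branch introducing a single first-order variable of $G$ of the appropriate sort. The free set variables $A$ and $B$ survive the translation, appearing exactly where the encoding of the edge set of $\separate(G;A,B)$ had to be confined to $A$ or to $B$; this is why the result $\phi^{*}(A,B)$ carries $A$ and $B$ as parameters. A routine induction then shows $G\models\phi^{*}(A,B)$ if and only if $\separate(G;A,B)\models\phi$, with the length of $\phi^{*}$ bounded by a constant times the length of $\phi$.

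The only genuinely delicate part is the bookkeeping for the fresh matching edges: these are not edges of $G$, so when a quantifier of $\phi$ ranges over edges of $\separate(G;A,B)$ the translation must simultaneously range over a subset of $E$ (for the two page-copies) and over a subset of $V$ (for the matching edges), and the incidence clauses must make each matching edge incident to exactly the two copies of its vertex and to nothing else. Making the case analysis for $\inc$ and for edge-equality exhaustive and mutually exclusive across the three edge types is essentially the whole content of the proof; everything else is the standard, mechanical syntactic-interpretation construction, which is why we state the lemma without the routine details and cite the method.
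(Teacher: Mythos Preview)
Your proposal is correct and follows exactly the approach the paper indicates: the paper does not give a proof of this lemma beyond stating that it ``can be proved using the method of syntactic interpretations'' and citing \cite{EbbFluTho-1994,Gro-JCSS-2004,Courcelle-Book}. Your sketch is a faithful and correct unfolding of that method for the specific transduction $G\mapsto\separate(G;A,B)$, including the one non-routine point (encoding the new matching edges by vertices of $G$); there is nothing to add or correct.
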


Now, we can express Property 6 as an $\MSO_2$-formula $\theta_6$ using \autoref{lem:mso-sep}, as planarity is expressible by \autoref{lem:mso-minor} and the fact that planar graphs are the graph that avoid $K_5$ and $K_{3,3}$ as minors. Thus, we define $\twopage$ to be the formula expressing the existence of $A_b, A_c, A_i, B_b,B_c,B_i$ satisfying $\theta_1, \ldots \theta_6$.

\begin{theorem}
There exists a computable function $f$ and an algorithm that can decide whether a given graph with treewidth $k$ is $2$-page planar in $O(f(k)n)$ time.
\end{theorem}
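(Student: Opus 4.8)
The plan is to mirror the structure used for the $1$-page crossing number: assemble a single, fixed $\MSO_2$ sentence that captures exactly the $2$-page planar graphs, and then invoke Courcelle's theorem with the given treewidth bound~$k$. Compared with the $1$-page case this is actually cleaner, since treewidth is the only parameter here: we need neither an analogue of \autoref{lem:1-page-crossing-treewidth} relating the parameter to treewidth, nor a $k$-indexed family of formulas — one sentence of constant length does the job.

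First I would invoke the structural characterization \autoref{lem:2-page-planar}, which says $G$ is $2$-page planar if and only if $E$ admits a partition into six sets $A_b,A_c,A_i,B_b,B_c,B_i$ satisfying Properties~\ref{cr2:cycles}--\ref{cr2:planar}. Next I would turn each property into an $\MSO_2$ formula, as indicated in the discussion above. Properties~\ref{cr2:cycles}--\ref{cr2:inner} are direct $\MSO_2$ statements about unions of edge-disjoint cycles, absence of extra cycles through a specified edge set, and the two endpoints of an edge lying on a common cycle; Property~\ref{cr2:outerplanar} (each page is outerplanar) uses the forbidden-minor formula $\outerplanar$ with its quantifiers relativized to the page's edge set and incident vertices, exactly as the relativized $\outerplanar$ was used to build $\theta_4$ for $1$-page crossing; Property~\ref{cr2:paths} is an $\MSO_2$ assertion about the nonexistence of two vertex-disjoint paths in crossing position on a cycle, with neither path a lone $X_i$-edge; and Property~\ref{cr2:planar} combines the forbidden-minor characterization of planarity ($\neg\minor_{K_5}\wedge\neg\minor_{K_{3,3}}$) with the syntactic-interpretation lemma \autoref{lem:mso-sep} that transports a formula across the $\separate(\cdot;\cdot,\cdot)$ construction. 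Taking the existential closure over the six edge sets, together with a clause asserting they partition~$E$, yields the sentence $\twopage$ of length $O(1)$ with $G\models\twopage$ iff $G$ is $2$-page planar.

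Finally I would apply Courcelle's theorem to $\twopage$ and the treewidth bound~$k$: this produces an algorithm deciding $G\models\twopage$, i.e.\ whether $G$ is $2$-page planar, in time $O(f(k,\ell)\cdot(n+m))$, where $\ell=|\twopage|=O(1)$. Since a graph of treewidth at most $k$ has at most $kn$ edges, $n+m=O(kn)$, so the bound becomes $O(f(k)\,n)$ for a computable function $f$, as claimed.

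The genuinely hard step — the ``flip the flaps so that each cycle of $X_c$ separates its interior $X_i$-edges from the rest'' reorganization of the planar embedding, together with the Euler-tour/subhamiltonian argument that recovers a $2$-page embedding — lives inside the proof of \autoref{lem:2-page-planar}, which we take as given. Granting it, the only remaining work in the theorem is the routine but slightly fiddly bookkeeping that each $\theta_i$ really is a finite $\MSO_2$ formula — in particular, encoding the interleaving-of-path-endpoints-on-a-cycle predicate of Property~\ref{cr2:paths}, correctly relativizing the minor formulas to an edge subset for Property~\ref{cr2:outerplanar}, and composing the planarity minor-formula with the $\separate$ interpretation for Property~\ref{cr2:planar}.
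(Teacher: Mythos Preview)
Your proposal is correct and follows essentially the same approach as the paper: build the fixed $\MSO_2$ sentence $\twopage$ from \autoref{lem:2-page-planar} via the subformulas $\theta_1,\ldots,\theta_6$ (using \autoref{lem:mso-sep} for Property~\ref{cr2:planar}) and then apply Courcelle's theorem with the given treewidth bound. The paper's own proof is in fact just the one-line statement that the result follows from Courcelle's theorem together with the construction of $\twopage$; your write-up simply unpacks this more explicitly.
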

\begin{proof}
The result follows from Courcelle's theorem together with the construction of the $\MSO_2$ formula $\twopage$ representing the existence of a two-page planar embedding.
\end{proof}

\section{$2$-page crossing minimization}

We now extend the results of the previous section from $2$-page planarity to $2$-page crossing minimization. As in the $1$-page case, we will use a formula that involves a disjunction over crossing diagrams. Given a crossing diagram $D$ with $k$ crossings and $r+1$ edges, whose graph is $G$, we define the \emph{planarization} of $G$ with respect to $D$ to be the graph in which each edge $e_i$ is replaced by a path of degree four vertices, such that two of these replacement paths share a vertex if and only if the original two edges cross in $D$. As explained earlier, we do not care about the order of crossings along each edge (two crossing diagrams with the same sets of crossing pairs but with different crossing orders are considered equivalent. Nevertheless, we do preserve the order of crossings from (one representative of an equivalence class of) crossing diagrams to their planarizations, in order to ensure that the planarizations form planar graphs.

\begin{lemma}\label{lem:2-page-cross}
A graph $G = (V,E)$ has $\pagecross_2(G) = k$ if and only if there exists edges $e_0, e_1, \cdots, e_r$ with $r < 2k$ and a $2$-page crossing diagram $D$ with $k$ crossings on these edges such that when $G$ is planarized with respect to $D$ the resulting graph $G_D = (V_D, E_D)$ has a partition of $E_D$ into $A_b,A_c,A_i, B_b,B_c,B_i$ such that, for $X = A,B$:
\begin{enumerate}
\item $X_c$ is a union of edge disjoint cycles.
\item None of the cycles $X_c \cup X_b$ contains an edge in $X_b$.
\item If $e$ is an edge introduced in the planarization, then $e \in A_b \cup A_c \cup A_i$ if $e$ is in the first page of $D$, and $e \in B_b \cup B_c \cup B_i$ if it is in the second page of $D$.
\item\label{cr2:extended-inner} For every edge $e$ in $X_i$, there exists a subgraph $P$ containing $e$ and a cycle $C$ in $X_c$ such that $P$ consists only of vertices of $C$ and of degree-four vertices introduced in the planarization, $P$ contains at least two vertices of $C$, and $P$ includes all four edges incident to each of its planarization vertices.
\item For each two edges $e$ and $f$ in $X_i$, the two subgraphs $P_e$ and $P_f$ satisfying Property~\ref{cr2:extended-inner} do not each have a pair of endpoints in crossing position on the same cycle~$C$.
\item For each cycle $C$ in $X_c$ there do not exist two paths in $E$, such that neither path uses edges of $X_i$ or interior vertices of $C$, with four distinct endpoints on $C$ in crossing position.
\item the subdivision $\separate(G;A_b\cup A_c \cup A_i, B_b \cup B_c \cup B_i)$ is planar.
\end{enumerate}
\end{lemma}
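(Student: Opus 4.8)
The plan is to prove this the same way the $1$-page crossing characterization of \autoref{lem:1-page-cross} was built on the outerplanarity characterization: reduce to \autoref{lem:2-page-planar}, applied not to $G$ but to its planarization $G_D$. The point is that a $2$-page drawing of $G$ with $k$ crossings, once those crossings are resolved, is a crossing-free drawing of $G_D$ in the two pages (with the original vertices still on the spine and the $k$ planarization vertices placed at the former crossing points in the page interiors), so the six-set decomposition of \autoref{lem:2-page-planar} should transfer to $G_D$; Properties~3--5 are precisely the adjustments forced by the fact that the planarization vertices have degree four and need not lie on the spine.

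For the forward direction I would begin with an optimal $2$-page drawing of $G$ realizing $\pagecross_2(G)=k$. Let $e_0,\dots,e_r$ be the edges that participate in at least one crossing; since each of the $k$ crossings uses two edges, $r+1\le 2k$, so $r<2k$. Recording, along the spine, the endpoints of these edges, their page assignment, the set of crossing pairs, and the order of the crossings along each edge in the drawing, yields a $2$-page crossing diagram $D$ with exactly $k$ crossings. Planarizing the drawing at these $k$ crossings turns it into a crossing-free drawing of $G_D$ in the two pages. I would then run the construction from the first half of the proof of \autoref{lem:2-page-planar} on this crossing-free drawing: set $X_c$ to be the boundary cycles of the maximal regions of each page that do not meet the spine, $X_i$ the edges of page $X$ drawn inside those cycles, and $X_b$ the remaining edges of $X$. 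Property~3 holds because a planarization path never changes page; Properties~4 and~5 hold because an interior ``flap'' of one of the cycles of $X_c$ is now a subgraph $P$ that may pass through planarization vertices before returning to the cycle, and two such subgraphs can interact only through crossings already recorded in $D$; Properties~1, 2, 6 and~7 are checked verbatim as in \autoref{lem:2-page-planar}.

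For the converse I would imitate the second half of the proof of \autoref{lem:2-page-planar}. Property~7 gives a planar embedding of $\separate(G_D;A_b\cup A_c\cup A_i,\,B_b\cup B_c\cup B_i)$; the flipping argument---using outerplanarity for the interior subgraphs of Property~4, Property~6 for exterior paths, and Property~5 for the mutual positions of the subgraphs $P_e$---may be applied cycle by cycle so that each cycle of $X_c$ separates its interior edges from the rest of the graph. Contracting the cycles of $X_c$, the isthmus edges $X_b$, and the interior parts $X_i$ in each page then leaves a planar bipartite multigraph with one edge $(v,A)$--$(v,B)$ per vertex $v$ of $G_D$, whose dual has an Euler tour; following this tour yields a Hamiltonian planar supergraph of $G_D$, and hence, by \autoref{lem:subhamiltonian}, a crossing-free $2$-page embedding of $G_D$ in which every planarization edge lies in the page prescribed by $D$. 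Finally I would \emph{un-planarize}: replace each degree-four planarization vertex by a transversal crossing of the two paths through it, recovering a $2$-page drawing of $G$ whose crossings are exactly the $k$ crossings recorded in $D$. This gives $\pagecross_2(G)\le k$; combined with the forward direction (which would produce a diagram with fewer crossings were $\pagecross_2(G)<k$), minimizing over $k$ gives the stated equality.

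I expect the main obstacle to be the converse, and within it the \emph{un-planarization} step: one must verify that in the reconstructed $2$-page embedding of $G_D$ the cyclic order of the four edges around each planarization vertex genuinely alternates between the two edges of $D$ that meet there, so that the vertex can be split back into a single transversal crossing placed in the correct page, and that doing this for all $k$ vertices creates no new crossings. This is exactly the behaviour that Properties~3, 4 and~5 are designed to force, so the real work is in showing that they suffice; the forward direction, by contrast, is an essentially routine transcription of the argument for \autoref{lem:2-page-planar}.
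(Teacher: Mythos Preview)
The paper does not actually give a proof of \autoref{lem:2-page-cross}; the lemma is stated and then immediately used to build the formula~$\zeta_k$. Your proposal fills this gap in exactly the way the paper's organisation signals it should be filled: just as \autoref{lem:1-page-cross} was obtained by reducing, via planarization, to the outerplanarity test underlying the $1$-page case, you reduce \autoref{lem:2-page-cross} to \autoref{lem:2-page-planar} applied to the planarization $G_D$, with Properties~3--5 accounting for the new degree-four vertices. That is the intended argument, and you have correctly located the only nontrivial step (the un-planarization in the converse direction) and the reason Properties~3--5 are there.

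One small point to watch: Property~7 in the lemma is written with $G$ rather than $G_D$, but the six sets partition $E_D$, so (as you implicitly assume) it is $\separate(G_D;\dots)$ whose planarity is at stake; you should flag or silently correct this when writing up. Also, in the converse you obtain a $2$-page book embedding of $G_D$ with \emph{all} vertices, including the planarization vertices, on the spine; the un-planarization step then requires not just that the four incident edges lie in the same page (which Property~3 guarantees, since the two crossing edges of $D$ share a page) but also that they alternate around the spine vertex so that replacing the vertex by a transversal crossing is possible without creating side crossings. This alternation is exactly what Property~4 (the subgraph $P$ picks up all four edges at each planarization vertex it uses) together with Property~5 is meant to enforce, so your identification of the ``real work'' is accurate.
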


Now, we construct a $\MSO_2$-formula $\zeta_k$ based on \autoref{lem:2-page-cross} such that $G \models \zeta_k$ if and only if $\pagecross_2(G) = k$. To handle the planarization process we use the following lemma. In the lemma, the notation $G^{e_1 \times e_2}$ describes  the graph obtained from a graph $G$ by deleting two edges $e_1$ and $e_2$ that do not share a common endpoint, and adding a new degree-4 vertex connected to the endpoints of $e_1$ and $e_2$.

\begin{lemma}[Grohe~\cite{Gro-JCSS-2004}]
    For every $\MSO_2$-formula $\phi$ there exists an $\MSO$-formula $\phi^*(x_1,x_2)$ such that $G \models \phi^*(e_1,e_2)$ if and only if $G^{e_1 \times e_2} \models \phi$.
\end{lemma}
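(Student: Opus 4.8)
The plan is to treat the transformation $G\mapsto G^{e_1\times e_2}$ as an $\MSO_2$ \emph{syntactic interpretation} (monadic second-order transduction) of $G$ in which the two edges $e_1,e_2$ play the role of parameters, and then to appeal to the standard backwards-translation property of such interpretations: for every interpretation $\mathcal I$ and every $\MSO_2$ sentence $\phi$ there is an $\MSO_2$ formula $\phi^{\mathcal I}$, whose only free variables are the parameters of $\mathcal I$, such that $G\models\phi^{\mathcal I}(\bar p)$ if and only if the interpreted structure $\mathcal I(G,\bar p)$ models $\phi$ (see \cite{EbbFluTho-1994,Gro-JCSS-2004,Courcelle-Book}; here, as throughout, a graph is identified with its incidence structure, so that $\MSO_2$ is ordinary monadic second-order logic over that two-sorted structure). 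Once an interpretation $\mathcal I$ with parameters $e_1,e_2$ that produces $G^{e_1\times e_2}$ is in hand, the required formula is simply $\phi^{*}(x_1,x_2):=\phi^{\mathcal I}(x_1,x_2)$.

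To build $\mathcal I$, recall that $G^{e_1\times e_2}$ arises from $G$ by deleting $e_1$ and $e_2$, adding one new vertex $w$, and adding four edges joining $w$ to the two endpoints of $e_1$ and the two endpoints of $e_2$; equivalently --- and this form lets us reuse \autoref{lem:mso-identify} --- it is obtained by subdividing $e_1$, subdividing $e_2$, and then identifying the two subdivision vertices. Since an interpretation cannot create elements out of nothing, the new vertex and the four new edges must be drawn from somewhere: I would take a constant number of additional disjoint copies of $G$ (a ``copying'' operation, part of the basic repertoire of interpretations) to supply them, and, as a convenience, recycle the now-deleted edge elements $e_1$ and $e_2$ --- re-sorting them as needed --- as two of the new elements. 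The sort predicates and the incidence relation of the output are then defined by first-order formulas with $e_1$ and $e_2$ as parameters (together with the copy index, which the copying operation provides): an inherited edge is incident to the inherited copies of its original endpoints, and each of the four new edges is incident to $w$ and to one endpoint of $e_1$ or $e_2$.

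The one point that genuinely needs care is that $\MSO_2$ cannot make canonical choices: there is no formula singling out ``the first endpoint of $e_1$'' or ``a particular fresh element of an auxiliary copy'', so both the new elements and the labeling of $e_i$'s two endpoints must be named by auxiliary monadic parameters --- singleton sets --- which the backwards-translation theorem existentially quantifies away when it produces $\phi^{\mathcal I}$. This existential quantification yields a genuine biconditional, rather than just one implication, precisely because the construction is \emph{functional up to isomorphism}: every setting of the auxiliary parameters that satisfies the (first-order definable) validity conditions --- each is a singleton, lands in the correct copy, is distinct from the others, and the endpoint-selecting singletons are incident to the right edge --- produces a graph isomorphic to $G^{e_1\times e_2}$, since the only freedom is in which anonymous element plays $w$ or a given new edge and in which endpoint of $e_i$ is called ``first'', none of which $G^{e_1\times e_2}$, or any $\MSO_2$ sentence, can distinguish. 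Hence ``there exist valid auxiliary parameters for which the translate of $\phi$ holds'' is equivalent to $G^{e_1\times e_2}\models\phi$, with $e_1,e_2$ left as free edge variables, which is exactly the assertion of the lemma. I expect this isomorphism-invariance check, and the by-cases bookkeeping of the new incidence relation, to be the only real work; everything else is an instance of the general machinery of interpretations.
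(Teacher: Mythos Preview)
The paper does not prove this lemma at all: it is quoted as a result of Grohe~\cite{Gro-JCSS-2004} and used as a black box, exactly as \autoref{lem:mso-identify} and \autoref{lem:mso-sep} are justified only by the phrase ``using the method of syntactic interpretations.'' Your proposal is precisely that method carried out in detail, so there is nothing to compare against --- your argument is the standard one the paper is implicitly pointing to, and it is correct, including the care you take with auxiliary singleton parameters and the isomorphism-invariance check that turns the existential quantification over those parameters into a genuine biconditional.
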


Given any $\MSO_2$-formula $\phi$ and crossing diagram $D$, we can repeatedly apply the lemma above to construct a formula $\phi^D$ such that $G \models \phi^D(e_0,\ldots, e_r)$ if and only if $G_D \models \phi$. With this tool in hand it is straightforward to construct a formula $\gamma_D$ , expressing the property that, in a given graph $G$ we can build a crossing diagram with the structure of $D$, and partition the planarization $G_D$ into six sets, satisfying \autoref{lem:2-page-cross}. So we can define $\zeta_k$ to be the disjunction of the $\gamma_D$ ranging over all $2$-page crossing diagrams with $k$-crossings.

\begin{theorem}
There exists a computable function $f$ such that $\pagecross_2(G)$ can be computed in $O(f(k,t)n)$ time for a graph $G$ with $n$ vertices, $k = \pagecross_2(G)$, and $t = \treewidth(G)$.
\end{theorem}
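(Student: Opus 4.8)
The plan is to combine the $\MSO_2$-formula $\zeta_k$ constructed above with Courcelle's theorem and with the second parameter $t = \treewidth(G)$, searching over candidate values of the crossing number since we do not know $k$ in advance. First I would note that $\pagecross_2(G)$ is finite for every graph $G$: placing the vertices in an arbitrary order along the spine and assigning all edges to one page yields a $2$-page drawing with finitely many crossings. Hence the following procedure is well defined and halts: for $j = 0, 1, 2, \dots$, build the formula $\zeta_j$ described before the theorem statement and invoke Courcelle's theorem to decide whether $G \models \zeta_j$; return the first $j$ for which the test answers ``yes''. Because $G \models \zeta_j$ holds precisely when $\pagecross_2(G) = j$, the returned value equals $k$.

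For the time bound, recall that $\zeta_j$ has length $2^{O(j^2)}$. By Courcelle's theorem the $j$-th test runs in $O\big(f_0(t, 2^{O(j^2)})\cdot(n+m)\big)$ time for the computable function $f_0$ that theorem supplies; since $G$ has treewidth $t$ it has $m = O(tn)$ edges, so the test takes $O(g(t,j)\cdot n)$ time for a computable $g$. The loop reaches only the values $j = 0, 1, \dots, k$, so the total time is $\sum_{j=0}^{k} O(g(t,j)\cdot n) = O(f(k,t)\cdot n)$, where $f(k,t) = \sum_{j=0}^{k} g(t,j)$ is computable. This is the desired bound.

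The heavy lifting is already contained in \autoref{lem:2-page-cross} and in the construction of $\zeta_k$, so this last argument is mostly bookkeeping; the two points that require a moment's care — neither a real obstacle — are the following. First, $\zeta_j$ is evaluated on $G$ itself rather than on the planarization $G_D$: Grohe's edge-planarization interpretation, iterated once per crossing of the diagram $D$, rewrites every statement about $G_D$ as an equivalent statement about $G$, so Courcelle's theorem only ever has to be applied to $G$, whose treewidth is $t$ by hypothesis, and we never need to bound the (potentially larger, $t + O(k)$) treewidth of $G_D$. Second, the doubly exponential-looking length $2^{O(j^2)}$ of $\zeta_j$ is harmless because $j$ never exceeds $k$, so it is a constant depending only on $k$ and is absorbed into $f$.

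Finally I would add a remark explaining why the extra parameter $t$ is unavoidable here, unlike in the $1$-page setting. Deciding whether $\pagecross_2(G) = 0$ — that is, whether $G$ is $2$-page planar — is $\NP$-complete~\cite{ChuLeiRos-SJADM-87}, so, unless $\P = \NP$, there can be no algorithm running in time $O(f(k)\cdot \operatorname{poly}(n))$ parameterized by the crossing number alone, and in particular no treewidth bound of the form given by \autoref{lem:1-page-crossing-treewidth} can exist for $\pagecross_2$. Treewidth must therefore enter as a genuine second parameter, which is exactly the form of the statement being proved.
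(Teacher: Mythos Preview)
Your proposal is correct and follows the same approach as the paper, which in fact states this theorem without an explicit proof; you have simply spelled out the bookkeeping (iterate over $j$, invoke Courcelle's theorem on $\zeta_j$, sum the running times) that the paper leaves implicit after constructing $\zeta_k$. One small imprecision: you assert that $\zeta_j$ has length $2^{O(j^2)}$, but the paper only proves that bound for $\onepage_k$; for $\zeta_j$ the iterated application of Grohe's planarization interpretation may blow the length up further, though this is harmless since all you actually need is that the length is a computable function of~$j$.
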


\section{Conclusion}
We have provided new fixed-parameter algorithms for computing the crossing numbers for $1$-page and $2$-page drawings of graphs with bounded treewidth. The use of monadic second order logic and Courcelle's theorem in our solutions causes the running times of our algorithms to have an impractically high dependence on their parameters. We believe that it should be possible to achieve a better dependence by directly designing dynamic programming algorithms that use tree-decompositions of the given graphs, rather than by relying on Courcelle's theorem to prove the existence of these algorithms. Can this dependency be reduced to the point of producing practical algorithms? For $2$-page crossing minimization the runtime is parameterized by both the treewidth and the crossing number. Is $2$-page crossing minimization $\NP$-hard for graphs of fixed treewidth? We leave these questions open for future research.

It would also be of interest to determine whether three-page book embedding is fixed-parameter tractable in the treewidth or in other natural parameters of the input graphs. For this problem, we do not know of a logical characterization that would allow us to apply Courcelle's theorem. Even the special case of recognizing graphs with treewidth~$3$ that have three-page book embeddings would be of interest, to provide a computational attack on the still-open problem of whether there exist planar graphs that require four pages~\cite{DujWoo-DCG-07,Yan-STOC-86}.

\subsection*{Acknowledgments} This material is based upon work supported by the National Science Foundation under Grant CCF-1228639 and by the Office of Naval Research under Grant No. N00014-08-1-1015.

{\raggedright
\bibliographystyle{abuser}
\bibliography{paper}}

\begin{thebibliography}{10}

\bibitem{ArnCorPro-JADM-1987}
S.~Arnborg, D.~Corneil, and A.~Proskurowski.
\newblock {Complexity of finding embeddings in a $k$-tree}.
\newblock {\em SIAM J. Alg. Disc. Meth.} 8(2):277{--}284, 1987,
  \href{http://dx.doi.org/10.1137/0608024}%
{doi:10.1137/0608024}.

\bibitem{AusPar-JMM-1961}
L.~Auslander and S.~V. Parter.
\newblock {On imbedding graphs in the sphere}.
\newblock {\em Journal of Mathematics and Mechanics} 10(3):517{--}523, 1961.

\bibitem{BanEppSim-GD-2013}
M.~J. Bannister, D.~Eppstein, and J.~A. Simons.
\newblock {Fixed parameter tractability of crossing minimization of
  almost-trees}.
\newblock {\em Graph Drawing}, pp.~340{--}351. Springer, Lecture Notes in
  Computer Science 8242, 2013,
  \href{http://dx.doi.org/10.1007/978-3-319-03841-4\_30}%
{doi:10.1007/978-3-319-03841-4\_30}.

\bibitem{BerKai-JCT-1979}
F.~Bernhart and P.~C. Kainen.
\newblock {The book thickness of a graph}.
\newblock {\em Journal of Combinatorial Theory, Series B} 27(3):320{--}331,
  1979, \href{http://dx.doi.org/10.1016/0095-8956(79)90021-2}%
{doi:10.1016/0095-8956(79)90021-2}.

\bibitem{Bod-STOC-1993}
H.~L. Bodlaender.
\newblock {A linear time algorithm for finding tree-decompositions of small
  treewidth}.
\newblock {\em Proceedings of the Twenty-fifth Annual ACM Symposium on Theory
  of Computing}, pp.~226{--}234. ACM, STOC '93, 1993,
  \href{http://dx.doi.org/10.1145/167088.167161}%
{doi:10.1145/167088.167161}.

\bibitem{Bod-TCS-1998}
H.~L. Bodlaender.
\newblock {A partial $k$-arboretum of graphs with bounded treewidth}.
\newblock {\em Theoretical Computer Science} 209(1{--}2):1{--}45, 1998,
  \href{http://dx.doi.org/10.1016/S0304-3975(97)00228-4}%
{doi:10.1016/S0304-3975(97)00228-4}.

\bibitem{ChaHar-1967}
G.~Chartrand and F.~Harary.
\newblock {Planar permutation graphs}.
\newblock {\em Annales de l'institut Henri Poincar{\'e} (B) Probabilit{\'e}s et
  Statistiques} 3(4):433{--}438, 1967, \url{http://eudml.org/doc/76875}.

\bibitem{ChuLeiRos-SJADM-87}
F.~R.~K. Chung, F.~T. Leighton, and A.~L. Rosenberg.
\newblock {Embedding graphs in books: A layout problem with applications to
  VLSI design}.
\newblock {\em SIAM J. Alg. Disc. Meth.} 8(1):33{--}58, 1987,
  \href{http://dx.doi.org/10.1137/0608002}%
{doi:10.1137/0608002}.

\bibitem{Cou-IC-1990}
B.~Courcelle.
\newblock {The monadic second-order logic of graphs. I. Recognizable sets of
  finite graphs}.
\newblock {\em Information and Computation} 85(1):12{--}75, 1990,
  \href{http://dx.doi.org/10.1016/0890-5401(90)90043-H}%
{doi:10.1016/0890-5401(90)90043-H}.

\bibitem{Courcelle-Book}
B.~Courcelle and J.~Engelfriet.
\newblock {\em {Graph Structure and Monadic Second-Order Logic: A
  Language-Theoretic Approach}}.
\newblock Cambridge University Press, 2012.

\bibitem{DEmHajMoh-AACO-02}
E.~D. Demaine, M.~Hajiaghayi, and D.~M. Thilikos.
\newblock {1.5-Approximation for Treewidth of Graphs Excluding a Graph with One
  Crossing as a Minor}.
\newblock {\em Approximation Algorithms for Combinatorial Optimization},
  pp.~67{--}80. Springer Berlin Heidelberg, Lecture Notes in Computer Science
  2462, 2002, \href{http://dx.doi.org/10.1007/3-540-45753-4\_8}%
{doi:10.1007/3-540-45753-4\_8}.

\bibitem{DowneyFellows-Book}
R.~G. Downey and M.~R. Fellows.
\newblock {\em {Fundamentals of Parameterized Complexity}}.
\newblock Springer, 2013.

\bibitem{Duj-A-2008}
V.~Dujmovi{\'c}, M.~R. Fellows, M.~Kitching, G.~Liotta, C.~McCartin,
  N.~Nishimura, P.~Ragde, F.~Rosamond, S.~Whitesides, and D.~R. Wood.
\newblock {On the parameterized complexity of layered graph drawing}.
\newblock {\em Algorithmica} 52(2):267{--}292, 2008,
  \href{http://dx.doi.org/10.1007/s00453-007-9151-1}%
{doi:10.1007/s00453-007-9151-1}.

\bibitem{DujWoo-DCG-07}
V.~Dujmovi{\'c} and D.~R. Wood.
\newblock {Graph treewidth and geometric thickness parameters}.
\newblock {\em Discrete Comput. Geom.} 37(4):641{--}670, 2007,
  \href{http://dx.doi.org/10.1007/s00454-007-1318-7}%
{doi:10.1007/s00454-007-1318-7}.

\bibitem{EbbFluTho-1994}
H.-D. Ebbinghaus, J.~Flum, and W.~Thomas.
\newblock {\em {Mathematical logic}}.
\newblock Undergraduate Texts in Mathematics. Springer, 2nd edition, 1994,
  \href{http://dx.doi.org/10.1007/978-1-4757-2355-7}%
{doi:10.1007/978-1-4757-2355-7}.
\newblock Translated from the German by Margit Me{\ss}mer.

\bibitem{FlumGrohe-Book}
J.~Flum and M.~Grohe.
\newblock {\em {Parameterized Complexity Theory}}.
\newblock Springer, 2006.

\bibitem{Gol-GCC-1963}
A.~J. Goldstein.
\newblock {An efficient and constructive algorithm for testing whether a graph
  can be embedded in a plane}.
\newblock {\em Graph and Combinatorics Conference}, 1963.

\bibitem{Gro-JCSS-2004}
M.~Grohe.
\newblock {Computing crossing numbers in quadratic time}.
\newblock {\em Journal of Computer and System Sciences} 68(2):285{--}302, 2004,
  \href{http://dx.doi.org/10.1016/j.jcss.2003.07.008}%
{doi:10.1016/j.jcss.2003.07.008}.

\bibitem{Kai-GC-1974}
P.~C. Kainen.
\newblock {Some recent results in topological graph theory}.
\newblock {\em Graphs and Combinatorics}, pp.~76{--}108. Springer, Lecture
  Notes in Mathematics 406, 1974, \href{http://dx.doi.org/10.1007/BFb0066436}%
{doi:10.1007/BFb0066436}.

\bibitem{KawRee-STOC-07}
K.~Kawarabayashi and B.~Reed.
\newblock {Computing crossing number in linear time}.
\newblock {\em ACM Symp. Theory of Computing (STOC 2007)}, pp.~382{--}390,
  2007, \href{http://dx.doi.org/10.1145/1250790.1250848}%
{doi:10.1145/1250790.1250848}.

\bibitem{Mit-IPL-1979}
S.~L. Mitchell.
\newblock {Linear algorithms to recognize outerplanar and maximal outerplanar
  graphs}.
\newblock {\em Information Processing Letters} 9(5):229{--}232, 1979,
  \href{http://dx.doi.org/10.1016/0020-0190(79)90075-9}%
{doi:10.1016/0020-0190(79)90075-9}.

\bibitem{Oll-SCC-1973}
L.~T. Ollmann.
\newblock {On the book thicknesses of various graphs}.
\newblock {\em Proc. 4th Southeastern Conference on Combinatorics, Graph Theory
  and Computing}, vol.~8, p.~459, 1973.

\bibitem{PelSchSte-GD-07}
M.~J. Pelsmajer, M.~Schaefer, and D.~{\v{S}}tefankovi{\v{c}}.
\newblock {Crossing numbers and parameterized complexity}.
\newblock {\em Graph Drawing}, pp.~31{--}36. Springer, Lecture Notes in Comput.
  Sci. 4875, 2008, \href{http://dx.doi.org/10.1007/978-3-540-77537-9\_6}%
{doi:10.1007/978-3-540-77537-9\_6}.

\bibitem{RobSey-JCT-1995}
N.~Robertson and P.~D. Seymour.
\newblock {Graph minors. XIII. The disjoint paths problem}.
\newblock {\em Journal of Combinatorial Theory, Series B} 63(1):65 {--} 110,
  1995, \href{http://dx.doi.org/10.1006/jctb.1995.1006}%
{doi:10.1006/jctb.1995.1006}.

\bibitem{ShaSykSze-GTCCS-1995}
F.~Shahrokhi, O.~S{\'y}kora, L.~A. Sz{\'e}kely, and I.~V{\v{r}}{\v{t}}o.
\newblock {Book embeddings and crossing numbers}.
\newblock {\em Graph-Theoretic Concepts in Computer Science}, pp.~256{--}268.
  Springer, Lecture Notes in Computer Science 903, 1995,
  \href{http://dx.doi.org/10.1007/3-540-59071-4\_53}%
{doi:10.1007/3-540-59071-4\_53}.

\bibitem{Shi-PhD-1969}
R.~W. Shirey.
\newblock {\em {Implementation and Analysis of Efficient Graph Planarity
  Testing Algorithms}}.
\newblock Ph.D. thesis, The University of Wisconsin {--} Madison, 1969.

\bibitem{Wat-VIS-2002}
M.~Wattenberg.
\newblock {Arc diagrams: visualizing structure in strings}.
\newblock {\em IEEE Symposium on Information Visualization (INFOVIS 2002)},
  pp.~110{--}116, 2002, \href{http://dx.doi.org/10.1109/INFVIS.2002.1173155}%
{doi:10.1109/INFVIS.2002.1173155}.

\bibitem{Wie-WG-1987}
M.~Wiegers.
\newblock {Recognizing outerplanar graphs in linear time}.
\newblock {\em Graph-Theoretic Concepts in Computer Science}, pp.~165{--}176.
  Springer, Lecture Notes in Computer Science 246, 1987,
  \href{http://dx.doi.org/10.1007/3-540-17218-1\_57}%
{doi:10.1007/3-540-17218-1\_57}.

\bibitem{Yan-STOC-86}
M.~Yannakakis.
\newblock {Four pages are necessary and sufficient for planar graphs}.
\newblock {\em Proc. 18th ACM Symp. on Theory of Computing (STOC '86)},
  pp.~104{--}108, 1986, \href{http://dx.doi.org/10.1145/12130.12141}%
{doi:10.1145/12130.12141}.

\end{thebibliography}

\ifFull
\newpage
\appendix

\section{Expressing graph properties in $\MSO_2$}
\label{sec:mso2}

For readers unfamiliar with $\MSO_2$ logic, we provide in this appendix some standard examples of graph properties that may be expressed in this logic, leading up to the properties that we use in our results. Additional examples may be found in the one of the standard introductions to graph logic~\cite{Courcelle-Book,FlumGrohe-Book, DowneyFellows-Book}. The building blocks in this section can be used to construct the formulas that we use throughout our paper.

Because the equal sign ($=$) is an element that is used within $\MSO_2$ formulas, expressing the equality relation between two vertices, edges, or sets, we instead use the equivalence sign ($\equiv$) to express the syntactic equality of two formulas, or the assignment of a name to a formula.

\subsection{$k$-Coloring}
The formula $\formula{color}_k$ that we construct below expresses the $k$-colorability of a graph. As a step towards the construction of $\formula{color}_k$, we first construct a formula $\formula{vertex-partition}$ expressing the property that a collection of vertex sets forms a partition of the vertices: the sets are disjoint from each other and their union contains all vertices in the graph.
\begin{multline*}
\formula{vertex-partition}(U_1, \ldots, U_k)
\equiv \\
(\forall v) \left[ \left( \bigvee_{i=1}^k v \in U_k \right) \wedge \left( \bigwedge_{i\neq j} \neg (v \in U_i \wedge v \in U_j) \right) \right] 
\end{multline*}
A formula $\formula{edge-partition}$ expressing the property that a collection of edge sets forms a partition of the edges in the graph may be constructed in the same way by changing vertex variables to edge variables and vertex set variables to edge set variables.

With the ability to partition vertices we can now construct $\formula{color}_k$. The construction uses the fact that a $k$-coloring forms a partition of the vertices with the additional property that, for every color class $C$, all edges have an endpoint of a different color than $C$.
\begin{multline*}
\formula{color}_k \equiv (\exists U_1, \ldots, U_k) \Big[ \formula{vertex-partition}(U_1, \ldots, U_k)\\ \wedge \bigwedge_{i=1}^k (\forall e)(\exists v)[\inc(e,v) \wedge v \not\in U_i] \Big]
\end{multline*}

\subsection{Minor containment and planarity}
Next, we construct a formula $\minor_H$ expressing the property that a graph has $H$ as a minor. If we label each of the $k$ vertices in $H$ with a distinct number in the range from $1$ to~$k$, then $H$ is a minor of $G$ if and only if there exists a corresponding collection of $k$ connected and disjoint subsets of the vertices of $G$, say $U_1$, \ldots, $U_k$, such that for each edge $(i,j)$ in $H$ there is an edge from $U_i$ to~$U_j$.

As part of this construction, we will use a formula $\formula{connected}$ expressing the property that a graph is connected. We will construct this formula by first constructing a formula $\formula{disconnected}$ expressing the property that a graph is disconnected. This is true if and only if the graph supports a nontrivial cut of the vertices with an empty cut-set.
\begin{multline*}
\formula{disconnected} \equiv
(\exists U)\Big[
(\exists u,v)\big[u \in U \wedge v \not\in U\big]\\
 \wedge \neg (\exists e)(\exists u,v)\big[\inc(e,u) \wedge \inc(e,v) \wedge u \in U \wedge v \not\in U\big]\Big]
\end{multline*}
We can now define $\formula{connected} \equiv \neg \formula{disconnected}$. A similar construction leads to formulas $\formula{connected-vertices}(V)$ and $\formula{connected-edges}(E)$ expressing the properties that vertex set $V$ describes a connected induced subgraph or that edge set $E$ describes a connected subgraph.

With the ability to express connectedness we can now construct $\minor_H$.
\begin{align*}
\formula{minor}_H
\equiv \exists(U_1, \ldots, U_k)\Bigg [&\bigwedge_{i=1}^k (\exists u)[u \in U_i]
\wedge
 \bigwedge_{i=1}^k \formula{connected-vertices}(U_i)\\
&\hspace{-0.85em}\wedge
\bigwedge_{i \neq j} (\forall v)[v \not\in U_i \vee v \not\in U_j] \\
&\hspace{-0.85em}\wedge\hspace{-0.95em}
\bigwedge_{(i,j) \in E_H} (\exists e) (\exists x,y)[\inc(e,x) \wedge \inc(e,y) \wedge x \in U_i \wedge y \in U_j] 
\Bigg ]
\end{align*}

Since the planar graphs are precisely the graphs that have neither $K_5$ nor $K_{3,3}$ as minors, we have
\[
\formula{planar} \equiv \neg \minor_{K_5} \wedge \neg \minor_{K_{3,3}}
\]
expressing the planarity of a graph in terms of these forbidden minors.

\subsection{Hamiltonicity}
Our last example will be a formula expressing the existence of a Hamiltonian cycle in a graph. A set of edges $F$ in a graph is a union of vertex-disjoint cycles if every edge in $F$ is adjacent to exactly two edges in $F$ other than itself. Thus,
\[
\formula{cycle-set}(F) \equiv (\forall e)\Big[e \in F \to (\exists^{=2} f)\big[f \in F \wedge e \neq f \wedge (\exists v)[\inc(e,v) \wedge \inc(f,v)]\big] \Big]
\]
expresses the property that $F$ is a disjoint union of cycles. (Here $\exists^{=2}$ is a logical shorthand for the existence of exactly two objects satisfying the given property, i.e. that there exist $f_1$ and $f_2$ both satisfying the property, that $f_1$ and $f_2$ are unequal, and that there do not exist three unequal edges all satisfying the property.) Then a set of edges is a single cycle if it is a union of cycles and forms a  connected subgraph. So we define
\[
\formula{cycle}(F) \equiv \formula{cycle-set}(F) \wedge \formula{connected-edges}(F),
\]
A set of edges $F$  spans a graph if every vertex is incident to at least one of the edges in~$F$.
\[
\formula{span}(F) \equiv (\forall v)(\exists e)[e \in F \wedge \inc(e, v)]
\]
Finally, a graph is Hamiltonian if it has a spanning cycle.
\[
\formula{hamiltonian} \equiv (\exists F)[\formula{cycle}(F) \wedge \formula{span}(F)]
\]

\section{Proof of \autoref{lem:2-page-planar}}
\begin{figure}
\centering
\includegraphics[width=0.45\textwidth]{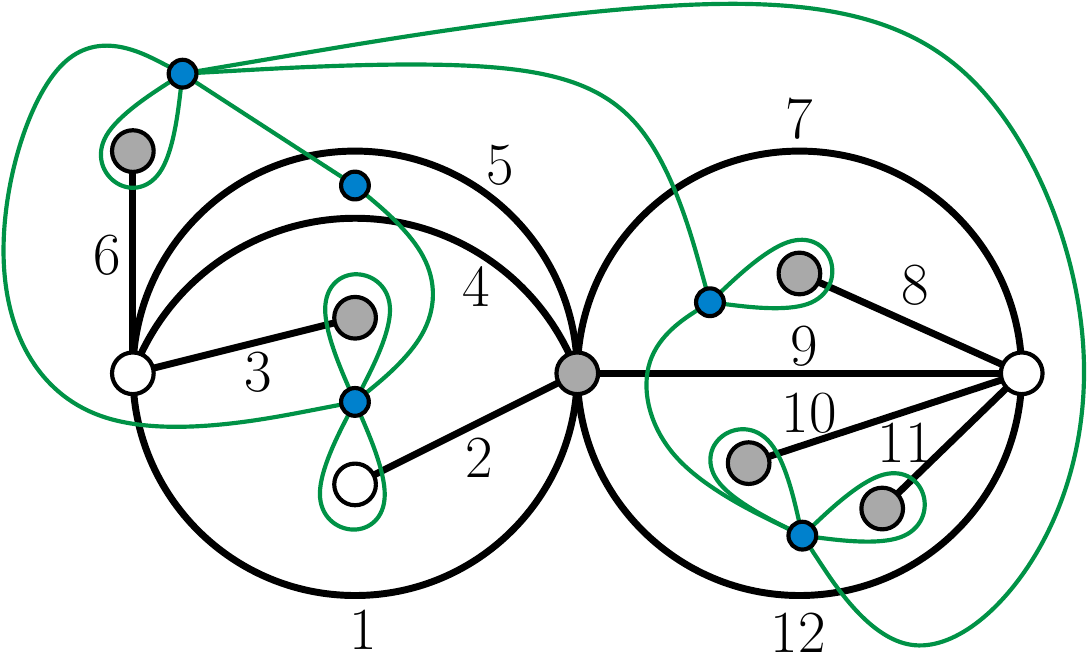}
\caption{The contraction of the graph in \autoref{fig:2-page-sep} and its planar dual (drawn with blue vertices and green edges). The edge labels correspond to the Hamiltonian cycle ordering of the vertices of $G$.}
\label{sec:proof}
\end{figure}
\lemtwopageplanar\qed

\fi
\end{document}